\newtheorem{theorem}{Theorem}
\newtheorem{proposition}[theorem]{Proposition}
\newtheorem{lemma}[theorem]{Lemma}
\newtheorem{corollary}[theorem]{Corollary} 
\theoremstyle{definition}
\newtheorem{definition}[theorem]{Definition}
\theoremstyle{plain}
\newtheorem{example}[theorem]{Example}
\theoremstyle{plain}
\numberwithin{theorem}{section}
\numberwithin{equation}{section}
\newcommand{\f}{\mathbf{f}}
\newcommand{\xixi}{\boldsymbol{\xi}}
\newcommand{\g}{\mathbf{g}}
\newcommand{\z}{\mathbf{z}}
\renewcommand{\c}{\mathbf{c}}
\newcommand{\one}{\mathbf{1}}
\newcommand{\RR}{\mathbb{R}}
\newcommand{\NN}{\mathbb{N}}
\newcommand{\ZZ}{\mathbb{Z}}
\newcommand{\CC}{\mathbb{C}}
\newcommand{\C}{\mathcal{C}}
\renewcommand{\L}{\mathcal{L}}
\renewcommand{\gg}{\tilde{g}}
\newcommand{\B}{{\mathcal B}}
\renewcommand{\H}{\mathcal{B}}
\newcommand{\vol}{\mathrm{vol}}
\newcommand{\twopartdef}[4]
{
	\left\{
	\begin{array}{ll}
		#1 & \mbox{if } #2 \\
		#3 & \mbox{if } #4
	\end{array}
	\right.
}
\newcommand{\twopartdefotherwise}[4]
{
	\left\{
	\begin{array}{ll}
		#1 & \mbox{if } #2 \\
		#3 & \mbox{otherwise} #4
	\end{array}
	\right.
}
\newcommand{\BV}{\B_V}
\newcommand{\PhiV}{\Phi}
\newcommand{\PsiE}{\Psi}
\newcommand{\admissible}{(G, \BV, \PhiV, \PsiE)}
\title{Stable Phase Retrieval from Locally Stable and Conditionally Connected
 Measurements}
\author{Cheng Cheng, Ingrid Daubechies, Nadav Dym and Jianfeng Lu}
\date{}
\affil[]{Department of Mathematics, Duke University, Durham, North Carolina 27708}
\begin{document}
	
	\setstcolor{red}

	\maketitle

\abstract{
This paper is concerned with locally stable phase retrieval for frames in infinite-dimensional or finite-but large-dimensional Banach spaces. In infinite-dimensional spaces, this phase retrieval
is known to never be uniformly stable. In finite-dimensional spaces, frames that allow phase retrieval are stable, with a finite stability constant; yet when one considers nested hierarchies of finite-dimensional approximation spaces these constants tend to infinity as the dimension grows, possibly suffering a ``curse of dimensionality", i.e. growth may be exponential
in the dimension. Thus, even in finite- but large-dimensional spaces, the guaranteed uniform stability may be too weak to be useful. For these reasons, several recent papers have focused instead on local stability results, studying the extent to which phase retrieval is possible in the neighborhood of particular signals, characterized by special ``connectivity" properties for their measurements. In particular, Grohs and Rathmair use the ``connectivity" of the signal in measurement space to measure the stability of the Gabor phase retrieval in an infinite dimensional setting.

To study the local stability of phase retrievable signals, we introduce the notion of ``locally stable and conditionally connected" (LSCC) measurement scheme associated with frames: to every signal $f$, we associate a corresponding weighted graph $G_f$, defined by the LSCC measurement scheme, and show that the phase retrievability of the signal $f$ is determined by the connectivity of $G_f$. We then characterize the phase retrieval stability of the signal $f$ by two
measures that are commonly used in graph theory to quantify the connectivity of the graph: the Cheeger constant of $G_f$ for real valued signals, and the algebraic connectivity of $G_f$ for complex valued signals.

 We next use our results to study the stability of two phase retrieval models that can be cast as LSCC measurement schemes, and focus on understanding for which signals the ``curse of dimensionality" can be avoided. The first model we discuss is a finite-dimensional model for locally supported measurements such as the windowed Fourier transform. For signals ``without large holes", we show the stability constant exhibits only a mild polynomial growth in the dimension, in stark contrast with the exponential growth of the uniform stability constants; more precisely, in $\RR^d$ the constant grows proportionally to $d^{1/2}$, while in $\CC^d$ it grows proportionally to $d$. We also show the growth of the constant in the complex case cannot be reduced, suggesting that complex phase retrieval is substantially more difficult than real phase retrieval. The second model we consider is an infinite-dimensional phase retrieval problem in a principal shift invariant space. We show that despite the infinite dimensionality of this model, signals with monotone exponential decay will have a finite stability constant. In contrast, the stability bound provided by our results will be infinite if the signal's decay is polynomial.}

	\section{Introduction}
	Phase retrieval considers the problem of recovering  a signal $f$,  up to a global phase, from the magnitudes of its linear measurements, such as the Fourier transform, frame measurements, and the short-time Fourier transform.  Phase retrieval has many applications in applied physics and engineering where the phase information is lost. Examples include optics, speech recognition and  X-ray crystallography \cite{fienup1978, phaseretrievalreview}.  
	 Many algorithms have been developed to reconstruct a signal, up to a global phase, from the magnitudes of its linear measurements,  prominent examples being alternating projection algorithms \cite{fienup1982phase, gerchberg1972}, and convex relaxation algorithms such as the celebrated PhaseLift algorithm \cite{candes2015phase}. There has also been much interest in theoretical study of phase retrieval, focusing on understanding how difficult a given phase retrieval problem is, in two aspects:  (i) phase retrievability and (ii) stability of phase retrieval. 

For a Banach space $\B$ and a set of linear measurements $\Phi$ in the dual of $\B$, $\Phi$ does \emph{phase retrieval} for $\B$ if  every signal $f \in \B$ is determined, up to a global phase, from it phaseless measurements $|\Phi(f)|:= \big(|\phi(f)|\big)_{\phi \in \Phi}$. In \cite{BCD06}, Balan and his collaborators show that  the frame vectors $\Phi$ do phase retrieval for signals in a real finite-dimensional Hilbert space if and only if they satisfy the  complement property. Equivalent conditions for phase retrieval signals in the complex finite-dimensional Hilbert space  are given in \cite{BCMN}. In both cases these conditions are satisfied for almost all $\Phi$ with sufficiently large cardinality. Results in the infinite dimensional settings are also known: In 
\cite{thakur2011reconstruction} the authors show that  real band-limited signals can be determined, up to a sign, from phaseless samples taken more than twice the Nyquist rate.  In \cite{chen2018phase, cheng2019phaseless}, the authors show that not all signals   in the shift-invariant spaces generated by a compact supported function are phase retrievable, and they characterize the phase retrievable signals by the connectivity of an appropriate graph.  In \cite{grochenig2019phase}, the authors show that signals in the shift-invariant space generated by the Gaussian function are phase retrievable if the sampling rate is larger than 2. 


The study of stability of phase retrieval  focuses on the sensitivity of the nonlinear phase retrieval problem to noise corruption of the phaseless measurements, which can be quantified by a stability constant $C>0$ satisfying 
\begin{equation}\label{eq:stable}
\inf_{\xi: |\xi|=1} \|f-\xi g\|_{\B}\leq C \| \, |\Phi(f)| - |\Phi(g)| \, \|_{D}, \ \forall f,g\in \B,  
\end{equation}
where $\|\cdot\|_{D}$ is a suitable norm on the measurement space. For finite-dimensional Hilbert spaces,  phase retrievability implies  the existence of a finite stability constant \cite{BZ16}. In \cite{BCMN},  the authors show that in the real setting the stability constant $C$ is characterized by a matrix condition called the  $\sigma-$strong complement property. Recently, it has been shown that phase retrieval in  infinite dimensional spaces is never uniformly stable, and the stability constant  suffers from the ``curse of dimensionality", as it can grow exponentially in the dimension \cite{alaifari2019stable}, \cite{alaifari2017phase}, \cite{alaifari2019gabor},  \cite{CCD16}. 

It is a folklore in the phase retrieval community   that the main source of phase retrieval instability in high/infinite dimensions is due to a ``disconnectedness" in measurement   space \cite{alaifari2019stable, alaifari2019gabor}. To deal with these connectivity issues,  the authors of \cite{alaifari2019stable} suggest a relaxed notion of phase retrieval up to phase ambiguity per connected component of the measured signal.   In \cite{alexeev2014phase},   the authors suggest  a  design of phase retrieval measurements which is based on expander graphs with good connectivity properties, and they use graph spectral theory to establish  stability for phaseless reconstruction from noisy phaseless measurements. In \cite{Grohs19,grohs2019stable}, the authors study the stability constant $C(f)$  of a fixed signal $f$ for which \eqref{eq:stable} holds for all $g \in \B$, that is
\begin{equation}\label{eq:stable_f}
\inf_{\xi: |\xi|=1} \|f-\xi g\|_{\B}\leq C(f) \| \, |\Phi(f)| - |\Phi(g)| \, \|_{D}, \ \forall g\in \B.  
\end{equation}
They show that the phase retrieval stability of the  continuous Gabor transform can be quantified by an appropriately defined signal-dependent Cheeger constant, which measures the connectivity of phaseless measurements $|\Phi(f)|$.  
	 
Inspired by  \cite{Grohs19,grohs2019stable}, in this paper we  study signal dependent stability constants $C(f)$, which depend  on the connectivity in measurement space. Our results differ from those in \cite{Grohs19,grohs2019stable} in that they consider different phase retrieval models, that the proofs of our results are arguably simpler and more intuitive than in \cite{Grohs19, grohs2019stable}, and that we consider discrete (finite or infinite) measurements in contrast with  the results in \cite{Grohs19, grohs2019stable} that  require a continuum of measurements to achieve stability. 

We focus on a family of phase retrieval models we call \emph{locally stable and conditionally connected (LSCC)} measurement schemes. Intuitively, these are measurement schemes which can be divided into regions where local stable phase retrieval holds, and where global stable phase retrieval of the signal  depends on how well connected these  regions are.  Given a signal $f$, we construct  a graph $G_f$ from the LSCC measurement scheme, and show that  
 connectivity of the graph $G_f$  implies the  phase retrievability of the signal $f$, see Proposition \ref{pro:graph}. We prove stability inequalities of the form \eqref{eq:stable_f}, with a constant $C(f)$  which depends on two common measures of graph connectivity: The Cheeger constant  and the algebraic connectivity.   For real Banach spaces, we characterize the phase retrieval stability constant by the reciprocal of the Cheeger constant  of the graph $G_f$, see Theorem \ref{thm:real}.  For  complex Banach spaces, we characterize the phase retrieval stability constant by   the reciprocal of the {\em algebraic connectivity} of $G_f$, see Theorem \ref{thm:complex}. The relation between the Cheeger constant and algebraic connectivity is captured by the Cheeger inequality \cite{chung2007four,chung1996laplacians}, given in \eqref{ineq:cheeger}, which bounds the algebraic connectivity between the Cheeger constant and the Cheeger constant squared. 

We apply our main results Theorem \ref{thm:real} and   Theorem \ref{thm:complex} to study the phase retrieval stability of two models considered in the literature, that  fulfill the LSCC  assumptions. The first is a model for phase retrieval of signals   in $\CC^d$ or $\RR^d$ from locally supported overlapping phaseless measurements,  such as the discrete windowed Fourier transform. In the context of the $\CC^d$ example, our results can be viewed as
complementary to those in \cite{iwen2020phase, preskitt2019admissible,iwen2019lower}. In \cite{iwen2020phase, preskitt2019admissible}, phase retrieval algorithms for this model
are constructed explicitly, and are proved to have a stability
constant proportional to $d^2$. In  \cite{iwen2019lower}  it is shown that \emph{any} stability constant for this model must grow at least like $d^{1/2} $. We show that this phase retrieval model can be interpreted as an LSCC measurement scheme, and that the phase retrieval stability constant obtained from Theorem~\ref{thm:complex} is linear in $d$. In addition, we show in Proposition \ref{prop:optimal} that the linear dependence of our bounds on $d$ is optimal. The $\RR^d$ example points to the fundamental difference between real and
complex phase retrieval: For signals in $\RR^d$,   Theorem~\ref{thm:real} yields bounds which are proportional to  $d^{1/2}$, which is the optimal rate in the real setting. The difference in scaling of the stability constant in the real and complex settings confirms  that the dependence of the stability constant we provide on  the Cheeger constant and algebraic connectivity respectively, is not an artifact of the proof technique we use but rather due to a fundamental difference between the two settings. 

 The second model we consider as an example of an LSCC measurement scheme is the  principal shape invariant space. Shift-invariant spaces are widely used in  sampling theory, wavelet theory, approximation theory and signal processing \cite{aldroubi2000}.  In \cite{chen2018phase, cheng2019phaseless}, the  phase retrievability of signal in a principal shift invariant space  is  characterized in terms of the connectivity of an appropriate infinite graph, and stability results for reconstruction from noisy phaseless samples are given. Our results complement their results by providing Lipschitz stability results in the sense of \eqref{eq:stable}, which hold for all $1\le p<\infty$ norms. The phase retrieval stability constant is formulated in terms of the Cheeger constant of the  graph proposed in \cite{cheng2019phaseless}. This stability constant may not be finite as for infinite graphs the Cheeger constant can be zero even when the graph is connected. We show that graphs corresponding to a signal  with  monotone exponential decay will have a positive Cheeger constant, while signals with polynomial decay will have a zero Cheeger constant.  

%
  This paper is organized as follows.  In Section \ref{sec:main},  we introduce  LSCC measurement schemes and construct a  signal-dependent graph $G_f$. We then  state our main results on  the  phase retrieval  stability constant both in real and complex settings.  
 We include two applications of our main results   in Section \ref{sec:example}.  The proof of the  main theorems are given in  Section \ref{sec:proof}, and proofs of other propositions stated in the paper are given in Section \ref{sec:add}. In Appendix \ref{app:Cheeger},  we give, for the sake of completeness, a proof of the Cheeger inequality for (in)finite graphs with summable weights. 
 
 \smallskip

 \textbf {Notation}: Throughout the paper we use  $\mathcal B$ to denote a  Banach space over the field $F$ with suitable norm $\|\cdot\|_{\mathcal B}$, where $F=\CC$ or $\RR$.  We denote the dual space by $\mathcal B^*$, i.e., the Banach space of bounded linear functionals $\phi: \mathcal B \mapsto F$. If $\Phi\subseteq \B^*$ is a collection of bounded linear functionals,  we write  
$\Phi(f)=\big(\phi(f)\big)_{\phi \in \Phi}$ and 
$|\Phi(f)|=\big(|\phi(f)|\big)_{\phi \in \Phi}$. For $\phi, \psi:\RR \to \RR_{>0}$,  we say that $\phi \sim \psi $ if there exist $0<a<b $ and $x_0\in \RR$ such that $a\leq \frac{\psi(x)}{\phi(x)}\leq b $ for all $x>x_0$. We denote the cardinality of a finite set $S$ by $\sharp S$.

\section{Main Results}\label{sec:main}

In this section we review some fairly standard phase retrieval definitions, define   {\em locally stable and conditionally connected (LSCC)}  measurement schemes, and study phase retrieval stability constants for these measurement schemes  in both the  real and complex settings.

Frames are redundant systems of vectors in a Banach space. They satisfy the well-known property of perfect reconstruction, that is the linear transformation $\B\ni f\mapsto \Phi(f)$ is injective and hence admits a left inverse. 

\begin{definition}\label{def:frame}
	Let $ \B$ be a Banach space over the field $F=\RR$ or $F= \CC$, and let $\Phi$ be a countable subset of its dual ${\B} ^*$. For $1\le p \le \infty$ and  $0<A \leq B$, we say that $\Phi$ is a \emph{ $p$-frame} on $\B$ with frame constants $(A,B) $, if for all $f \in \B $,
	\begin{equation}\label{eq:frame} 
	A\|f\|_{\mathcal B} \le \|\Phi(f)\|_p\le B\|f\|_{\mathcal B}.
	\end{equation} 
\end{definition}

Phase retrieval frames are frames for which the non-linear mapping $\B\ni f\mapsto |\Phi(f)|$ is injective, up to a global phase, despite the loss of phase information. Many papers have studied the question of when frames do phase retrieval for signals in some space. Readers may refer to \cite{alaifari2017phase, BCD06, BCMN} and references therein. 

\begin{definition}
	Let $ \B$ be a Banach space over the field $F=\RR$ or $\CC$, and let $\Phi$ be a $p$-frame for $1\le p \le \infty $. We say  a signal  $f\in \B$ is \emph{phase retrieval} from  $|\Phi(f)|$ if for every $g \in \B$ satisfying $|\Phi(f)|=|\Phi(g)|$, there exists $\xi \in F, |\xi|=1$ such that $f=\xi g$. We say $\Phi$ is a \emph{phase retrieval frame} on $\B$ if every $f\in \B$ is phase retrievable from $|\Phi(f)|$.    
	\end{definition}
	
Phase retrievability ensures the recovery of a signal $f\in \B$, up to a global phase, from its phaseless measurements. However, reconstructing a signal from its (typically noisy) phaseless measurements can still be very challenging. The difficulty of this non-linear inverse problem can be quantified by the notion of  stability, which is the focus of this paper.
\begin{definition}
	Let $ \B$ be a Banach space over the field $F=\RR$ or $\CC$, and let $\Phi$ be a $p$-frame for $1\le p \le \infty$. 
	We say that $f$ has a phase retrieval stability constant $C(f)$ if
	\begin{equation}\label{eq:PRf}
	\min_{\xi\in F,  |\xi|=1}  \|\Phi(f)-\xi\Phi(g)\|_p \leq C(f) \| \, |\Phi(f)|-|\Phi(g)| \,  \|_p, \,  \forall g\in \B.  
	\end{equation}
	We say that $\Phi$ is a \emph{stable phase retrieval frame} if there exists a global phase retrieval stability constant $C_{\B}>0$ such that 
	\begin{equation}\label{eq:PRframe}
	\min_{\xi\in F,  |\xi|=1}  \|\Phi(f)-\xi\Phi(g)\|_p \leq C_{\B} \| \, |\Phi(f)|-|\Phi(g)| \, \|_p, \,  \forall f, g\in \B.  
	\end{equation}
\end{definition}
We note that  \eqref{eq:PRf} and \eqref{eq:PRframe} focus on  stable recovery of the linear measurements from their magnitudes, up to  a global phase. The assumption on $\Phi$ being a $p$-frame implies the stable recovery of the signal itself from the linear measurements. 

The phase retrieval stability constant   $C(f)$ measures  the robustness  of reconstructing  a signal in a noisy setting from its phaseless measurements. Assume that the phaseless measurements are corrupted by noise $\eta$, i.e., 
\begin{equation}\label{defn:noise}
z=|\phi(f)|+\eta\geq 0,  \phi\in \Phi,  
\end{equation}
and assume that $\hat f$ is a reconstruction to the signal  $f$ which is obtained  by solving 
\begin{equation*}
\hat{f}= \mathrm{argmin}_{f\in \mathcal B} \| \ |\Phi(f)|-z\ \|_p. 
\end{equation*}
Then the error of the reconstruction is proportional to  the  phase retrieval stability constant $C(f)$ and the magnitude of the noise, as 

	\begin{eqnarray}
	\inf_{\xi\in F, |\xi|=1}\| \Phi(f)-\xi \Phi(\hat f)\|_p &\le &C(f) \| \, | \Phi(f)|-| \Phi(\hat f)| \, \|_p \nonumber\\
	&\le& C(f) \left[  \| \, |\Phi(f)|-z \, \|_p+\|\, z-|\hat \Phi(f)| \, \|_p  \right] \nonumber\\
	&\le&2C(f)\| \, |\Phi(f)|-z \, \|_p=2C(f)\| \eta\|_p. 
	\end{eqnarray}
\subsection{LSCC measurement scheme}	

In this subsection, we define LSCC measurement schemes. To motivate this definition, we start this subsection with a toy example: 

\begin{example}\label{ex:toy}
Consider the Banach space  $\B=\RR^4$ endowed with the $\ell^2$ norm. We think of elements of $\RR^4$ as functions from $\{1,2,3,4\}$ to $\RR$. For $1\leq k,\ell\leq 4$ we define ${\bf e}_k$ to be the function taking $k$ to $1$ and $\ell \neq k$ to zero. 
We set $\Phi_{k}=\{{\bf e}_k, {\bf e}_{k+1}, {\bf e}_k+{\bf e}_{k+1}\}$ for $1\le k\le 3$, and $\Phi=\cup_{k=1}^3\Phi_k$. We now consider the phase retrieval problem for this measurement scheme, that is, the problem of reconstructing a signal $f \in \B$, up to a global phase, from the phaseless measurements $|\Phi(f)|$. 

The measurement scheme is ``locally phase retrieval": the value of $\Phi_k(f) $ depends only on the coordinates indexed by $(k,k+1) $. Thus effectively we can think of $\Phi_k$ as functionals defined on the  subspace $\B_k$  which contains  functions $f\in \B$ supported on $\{k,k+1\}$. The fact that $\B_k$ is of  dimension $2$, and any two vectors in $\Phi_k$ are linearly independent, implies that $\Phi_k$ is a (stable) phase retrieval frame for  $\B_k$, due to the complement property \cite{BCD06}. It follows that if $f,g\in \B $ have the same phaseless measurements, then there exist  unimodular constants $\xi_k, k=1,2,3$ such that $f(k)=\xi_kg(k) $ and $f(k+1)=\xi_kg(k+1)$. Now note that
\begin{equation}\label{eq:toy}
\xi_kg(k)=f(k)=\xi_{k-1}g(k), \text{ for } k=2,3 .\end{equation}
and so  $\xi_k=\xi_{k-1} $ if $f(k)\neq 0$. This is the ``conditional connectivity" property -- the different local regions on which phase retrieval is guaranteed can be connected,  but this is conditioned on the sparsity pattern of $f$. For example,  consider the functions
\begin{equation}\label{eq:fj}f_0=\{1,2,3,4 \}, \quad f_1=\{1,2,0,1\} \text{ and } f_2=\{1,8,0,0\} .\end{equation}
We see that $f_0$ is phase retrievable from $|\Phi(f_0)|$, since for $g$ with  the same phaseless measurements as $f_0$, \eqref{eq:toy} implies that $\xi_1=\xi_2=\xi_3 $. In contrast, 
$f_1$ is not phase retrieval as $g=\{1,2,0,-1\}$ will have the same phaseless measurements, i.e., $|\Phi(f_1)|=|\Phi(g)|$. Finally $f_2$ is phase retrievable since it is in $\B_1$. 
\end{example}
\bigskip 

The structure of the measurement scheme of the toy example, where local phase retrieval is known, and global phase retrieval depends on the signal, occurs in several models for phase retrieval, such as the examples discussed in Section~\ref{sec:example}. These examples satisfy the assumptions of {\em  Local Stable and Conditionally Connected} (LSCC)  measurement scheme which we will now define:

\begin{definition}\label{def:conditions}
 Let $G=(V,E) $ be a unweighted graph with bounded degree $D:=Deg(G)$.  Let $$\B_V=(\B_v)_{v\in V}, \quad   \Phi=(\Phi_v)_{v\in V} \text{ and } \Psi=(\Psi_{u,v})_{(u,v)\in E}, $$
 where  $\B_v$ is a finite dimensional subspace of $\B$ for each $v\in V$, and   $\Phi_v$ and $\Psi_{{u,v}}$ are finite subsets of $\B^*$ for all $v\in V $ and $(u, v)\in E$.  
For $1\le p< \infty$, we say that $(G,\BV, \PhiV,\PsiE) $ is a   \emph{local stable and conditionally connected} (LSCC) measurement scheme for $\B$  
 with constants $(p,D,C_0,C_1)$  if the following assumptions  are satisfied:
 \begin{enumerate}
 	\item  \textbf{Local phase retrievability}: For each vertex $v\in V$,
$\Phi_v$ is a $p$-phase-retrieval frame for $\B_v$ with frame constants $(A,B) $ and a phase retrieval stability constant $C_0>0$ which are independent of $v$, i.e., 
\begin{equation}\label{const:local}
\min_{\xi, |\xi|=1} \|\Phi_v(f) -\xi\Phi_v(g)\|_p \le C_0  \|\ |\Phi_v(f)| - |\Phi_v(g)|\ \|_p, \quad \forall f,g \in \B_v.
\end{equation} Moreover, there exists a projection $P_v$ onto $\B_v$ and 
\begin{equation}\label{eq:proj}
\phi \circ P_v=\phi, \quad \forall \phi \in \Phi_v.\end{equation}

\item \textbf{Conditional global connectivity}: There exists $C_1>0$ such that for every edge $({u,v})$ in $E $,
\begin{equation}\label{eq:edges_dominated}
\|\Psi_{u, v}(f)\|_p \leq C_1 \|\Phi_v(f)\|_p \text { and }  \|\Psi_{u, v}(f)\|_p \leq C_1 \|\Phi_u(f)\|_p.
\end{equation}

\item \textbf{Exhaustion} The norms $f \mapsto \|f \|_\B$ and $f \mapsto \left[\sum_{v \in V} \|P_v f\|_\B^p \right]^{\frac{1}{p}} $ are equivalent.  
 \end{enumerate}

\end{definition}
\medskip

We note that with a slight abuse of notation, we will use $\Phi$ to denote both the sequence $\Phi=(\Phi_v)_{v\in V} $ and the union of this sequence $\Phi=\cup_{v\in V} \Phi_v $.

The uniform phase retrieval stability constant in \eqref{const:local}  has been studied in \cite{alaifari2019stable, BZ16, BCMN}. 
 The independence of the constants $A,B,C_0 $ on $v\in V$ is typically due to the symmetric structure of the measurement scheme. 

The functionals $\Psi_{u, v}$ should not be interpreted as  measurements, but rather as information that can be stably  inferred  from the local measurements $\Phi_v$ and $\Phi_u$. Give a signal $f\in \B$, the functionals $\Psi_{u, v}$ can ``glue'' the phase information of the local measurements $\Phi_u(f)$ and $\Phi_v(f)$ to ensure the  consistency of the phase on the edge $(u, v)\in E$, providing that $\Psi_{u, v}(f)\neq 0$.

The exhaustion assumption together with the assumption that each $\Phi_v$ is a frame with constants independent of $v$ implies that $\Phi=\cup_{v\in V} \Phi_v $ is a frame for $\B$. Thus stable recovery of the measurements implies stable recovery of the signal. 


We now return to the toy example, and explain how it can be interpreted as an LSCC measurement scheme.  Let $G=(V,E) $ where 
$$V=\{1,2,3\} \text{ and } E=\{(1,2), (2,3)\}. $$
For $k\in V$,  we take  $\B_k$ and $\Phi_k$ as described in the example and define $\Psi_{(1,2)}=\{\delta_2\} $ and $\Psi_{(2,3)}=\{\delta_3\}$. The local phase retrievability  assumption  holds as  the complement property is satisfied by the construction of $\Phi_k, k\in V$, and \eqref{eq:proj} holds as  $\Phi_k$ depends only on the $k$th and $k+1$th coordinates of $f$ in the toy example.  The  conditional global connectivity  assumption follows from the fact that $\Phi_k, k=1, 2$ are frames for $\B_k$ with some constants $(A,B)$,  and 
$$\|\Psi_{k,k+1}(f)\|_p \leq \|P_kf\|_p \leq A^{-1}\|\Phi_k(P_k f)\|_p=A^{-1}\|\Phi_k( f)\|_p. $$
A similar argument shows that $\|\Psi_{k,k+1}(f)\| \leq A^{-1}\|\Phi_{k+1}( f)\|_p$ so that \eqref{eq:edges_dominated} holds with $C_1=A^{-1} $. Finally the exhaustion property follows from the fact that 
$$\|f\|_\B^2\leq \sum_{k \in V} \|P_kf\|_\B^2 =f(1)^2+2f(2)^2+2f(3)^2+f(4)^2\leq 2 \|f\|_\B^2. $$

\begin{figure}[t]
	\begin{center}
	\includegraphics[width=112mm]{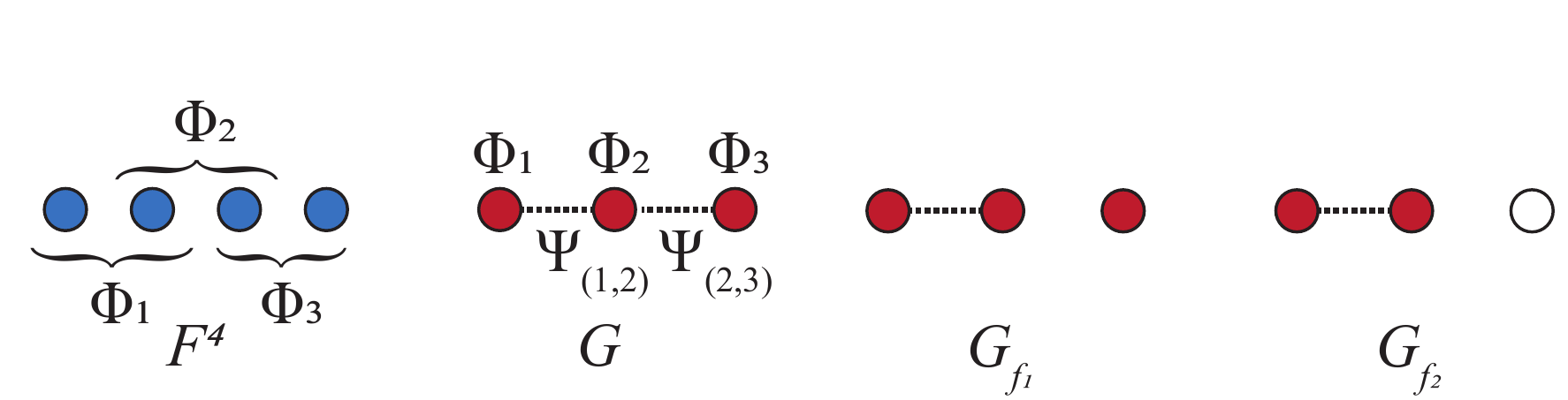}
	\caption{Illustration of the toy examples.}	\label{fig:PR}
	\end{center}
\end{figure}

\subsection{Connectivity and phase retrievability} 
In this subsection, we provide a sufficient condition for the phase retrievability of a  signal $f\in \B$, via the connectivity of a graph $G_f$ induced by the signal $f$ and the LSCC measurement scheme, see Proposition \ref{pro:graph}.  

For a signal $f\in \B$, we define the non-negative weight  $w$  associated  with an  LSCC measurement scheme $\admissible$  as 
\begin{equation}\label{eq: graphweight} w_v=\|\Phi_v(f)\|_p^p, \  \forall v\in V \ \text{ and } \ w_{u,v}=\|\Psi_{u,v}(f)\|_p^p, \ \forall (u, v)\in E, 
\end{equation}
where $1\le p<\infty$. Now use these weights to  define a weighted graph $G_f: =(V_f,E_f, w)$ for a signal $f\in \B$,  where 
\begin{equation}\label{def:graph}V_f=\{v \in V| \, w_v >0 \} \  \text{ and }  \ E_f=\{(u, v)\in V_f \times V_f| \, w_{u, v}>0 \}. 
\end{equation}  

We now return once again to our toy example. Figure~\ref{fig:PR} shows the graphs $G_{f_0}=G$, $G_{f_1}$ and $G_{f_2} $  induced by the functions $f_0,f_1,f_2$ in \eqref{eq:fj}. We notice that the connectivity of the induced graph $G_f$ coincides with the phase retrievability of the signal, as the graphs $G_{f_0}, G_{f_2}$ are connected, while $G_{f_1}$ is not. The next proposition shows that the  connectivity of $G_f$ is a sufficient condition for phase retrievability of the signal $f$:
\begin{proposition}\label{pro:graph}
	Let  $\admissible$ be an LSCC measurement scheme, and let $f$ be a signal in $\B$. If the graph $G_f$ associated with the signal $f$ is connected, then the signal $f$ is phase retrievable from $|\Phi(f)|$. 
\end{proposition}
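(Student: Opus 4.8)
The plan is to take an arbitrary $g \in \B$ with $|\Phi(f)| = |\Phi(g)|$ and produce a single unimodular constant $\xi$ with $f = \xi g$. Since $\Phi = \cup_{v} \Phi_v$, the hypothesis gives $|\Phi_v(f)| = |\Phi_v(g)|$ for every $v \in V$. Using the projection property \eqref{eq:proj} we have $\Phi_v(f) = \Phi_v(P_v f)$ and $\Phi_v(g) = \Phi_v(P_v g)$, so $P_v f$ and $P_v g$ are elements of $\B_v$ with equal phaseless local measurements. The local phase retrievability assumption then yields, for each $v \in V_f$ (where $\Phi_v(f) \neq 0$ forces $P_v f, P_v g \neq 0$ by injectivity of the local frame), a unimodular scalar $\xi_v$ with $P_v f = \xi_v P_v g$; this $\xi_v$ is uniquely determined because $P_v g \neq 0$.

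The crucial step is to show that the edge functionals propagate the phase, i.e. that $\xi_u = \xi_v$ whenever $(u,v) \in E_f$. First I would extract from the domination inequality \eqref{eq:edges_dominated} that every $\psi \in \Psi_{u,v}$ factors through both projections: if $P_v h = 0$ then $\Phi_v(h) = \Phi_v(P_v h) = 0$, so $\|\Psi_{u,v}(h)\|_p \le C_1 \|\Phi_v(h)\|_p = 0$ and hence $\ker P_v \subseteq \ker \psi$; writing $h = P_v h + (I - P_v)h$ then gives $\psi = \psi \circ P_v$, and symmetrically $\psi = \psi \circ P_u$. For an edge of $E_f$ there is some $\psi \in \Psi_{u,v}$ with $\psi(f) \neq 0$, and then $\psi(f) = \psi(P_u f) = \xi_u \psi(P_u g) = \xi_u \psi(g)$, while likewise $\psi(f) = \xi_v \psi(g)$. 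Since $\psi(f) \neq 0$ forces $\psi(g) \neq 0$, comparing the two expressions gives $\xi_u = \xi_v$.

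With phase consistency across every edge of $E_f$ established, connectivity of $G_f$ lets me conclude that all the $\xi_v$, $v \in V_f$, coincide with a common unimodular $\xi$. It then remains to upgrade $P_v f = \xi P_v g$ from holding on $V_f$ to holding on all of $V$: for $v \notin V_f$ we have $\Phi_v(f) = 0$, hence also $\Phi_v(g) = 0$ by the equal magnitudes, and injectivity of the local frame on $\B_v$ forces $P_v f = 0 = P_v g$, so the identity holds trivially there too. Finally, linearity of each $P_v$ gives $P_v(f - \xi g) = 0$ for all $v \in V$, and the exhaustion assumption---equivalence of $\|\cdot\|_\B$ with $\big(\sum_{v} \|P_v \cdot\|_\B^p\big)^{1/p}$---yields $\|f - \xi g\|_\B = 0$, i.e. $f = \xi g$, which is the desired phase retrievability.

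The main obstacle is the middle step: recognizing that the conditional-connectivity inequality \eqref{eq:edges_dominated} is exactly what guarantees each $\Psi_{u,v}$ is inferable from (factors through) the local pieces $\B_u$ and $\B_v$, which is what allows the independently-chosen local phases $\xi_u, \xi_v$ to be compared along an edge where $\Psi_{u,v}(f) \neq 0$. Once that factorization is in hand, the remainder is bookkeeping with the projection property, graph connectivity, and the exhaustion norm equivalence.
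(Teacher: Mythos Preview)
Your proof is correct and follows essentially the same approach as the paper's: obtain local phases $\xi_v$ via local phase retrieval, propagate them across edges using the domination inequality \eqref{eq:edges_dominated}, and conclude via connectivity. The only minor differences are that you extract an explicit factorization $\psi = \psi \circ P_v$ (the paper instead applies \eqref{eq:edges_dominated} directly to $\xi_v f - g$ to get $\Psi_{u,v}(g)=\xi_v\Psi_{u,v}(f)$) and you finish via the exhaustion norm equivalence rather than via $\Phi$ being a frame; both routes are equivalent here.
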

\begin{proof}
	Suppose there exists a signal $g\in \B$ such that $|\Phi(g)|=|\Phi(f)|$. For any vertex $v\in V_f$, we know that $\Phi_v (g)=\Phi_v (P_vg)= \xi_{v} \Phi_v (P_vf)=\xi_{v} \Phi_v (f)$ for some constant $\xi_v\in F, |\xi_v|=1$,  as $\Phi_v$ is a phase retrieval frame for the space $\B_v$. For an edge $(u, v)\in E_f$, we know that
	$$
	\|\Psi_{u, v}(\xi_{v}f-g)\|_p \le C_1  \|\Phi_{v}(\xi_{v}f-g)\|_p=0
	$$
	and 
	$$
		\|\Psi_{u, v}(\xi_{u}f-g)\|_p \le C_1  \|\Phi_{u}(\xi_{u}f-g)\|_p=0. $$
 Then for any edge $(u, v)\in E_f$, we have  $\xi_{v}\Psi_{u, v}(f)=\xi_{u}\Psi_{u, v}(f)$. By the definition of the edge set $E_f$  we know that $\Psi_{u, v}(f)$ is non-zero and so   $\xi_v=\xi_u$. Since $G_f$ is connected, there exists a unimodular constant $\xi$ such that $\xi_{v}=\xi$ for all $v\in V_f$, so $\Phi_v(f)=\xi \Phi_v(g) $ for all $v\in V_f$. This equality holds for $v \in V \setminus V_f$ as well, as $\Phi_v(f)=0=\xi \Phi_v(g)$. We conclude that $\Phi(f)=\Phi(\xi g)$. Since   $\Phi$ is a frame,  this implies that $f=\xi g$.
\end{proof}


\subsection{Stable phase retrieval in real Banach spaces}
While phase retrievability of a signal $f$ with respect to an LSCC measurement scheme is determined by the connectivity of the graph $G_f$, the phase retrieval stability constant of $f$ is related to constants which measure ``how connected" the weighted graph  $G_f=(V_f, E_f, w)$ in \eqref{def:graph} is. In the real setting,  we characterize the stability constant through a popular measure of graph connectivity known as the Cheeger constant.   To define the Cheeger constant we first define the boundary $\partial S$ of a set $S \subseteq V_f$ as 
$$\partial S=\{(u,v) \in E_f| \ u \in S \text{ and } v \not \in S \}.$$
The Cheeger constant is then defined by 
\begin{equation}\label{eq:cheeger} 
C_G(f)= \inf_{ \substack{S\subset V_f \\ \sum\limits_{v \in S} w_v \leq \frac12 \sum\limits_{v \in V_f} w_v }}\frac{\sum_{(u, v) \in \partial S} w_{u, v}}{\sum_{v \in S} w_v}. 
\end{equation}

In the next theorem, we show the phase retrieval stability constant of $f$ is proportional to the reciprocal of  $C_G^{1/p}(f)$:
\begin{theorem}\label{thm:real}
Assume $\B$ is a Banach space over $\RR$, and $(G,\BV, \PhiV,\PsiE) $ is an LSCC measurement scheme  with parameters $(C_0,C_1,D,p), 1\le p<\infty$. Then there exists  a global constant $C_2=C(p,G,C_0,C_1)$ 
 such that for all $f \in \B$,
	\begin{equation}\label{eq:real_result}
	\min_{\xi \in \{-1,1\}}  \|\Phi(f)-\xi\Phi(g)\|_p \leq C_2(1+ C_G^{-1/p}(f)) \| \, |\Phi(f)|-|\Phi(g)| \,\|_p, \ \forall g \in \B. 
	\end{equation}
\end{theorem}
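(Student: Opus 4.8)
The plan is to build a single global sign $\xi\in\{-1,1\}$ by aggregating the local signs coming from local phase retrieval, and to pay for the vertices where the global sign disagrees with the local one using the Cheeger constant. Write $\delta_v := \| \, |\Phi_v(f)|-|\Phi_v(g)| \, \|_p$ and $\epsilon := \| \, |\Phi(f)|-|\Phi(g)| \, \|_p$, so that, reading $\Phi$ as the concatenated sequence, $\epsilon^p=\sum_{v\in V}\delta_v^p$. First I would invoke local phase retrievability \eqref{const:local}: using the projection identity \eqref{eq:proj} so that $\Phi_v(f)=\Phi_v(P_vf)$ with $P_vf,P_vg\in\B_v$, for each $v$ there is a sign $\xi_v\in\{-1,1\}$ with $\|\Phi_v(f)-\xi_v\Phi_v(g)\|_p\le C_0\delta_v$. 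These local signs partition $V_f$ into $V^+=\{v:\xi_v=1\}$ and $V^-=\{v:\xi_v=-1\}$.

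The key step is to show that a boundary edge, on which the two local signs disagree, carries small weight. For an edge $(u,v)\in E_f$ with $\xi_u\neq\xi_v$, applying the conditional global connectivity bound \eqref{eq:edges_dominated} to $f-\xi_v g$ and to $f-\xi_u g$ gives $\|\Psi_{u,v}(f)-\xi_v\Psi_{u,v}(g)\|_p\le C_0C_1\delta_v$ and $\|\Psi_{u,v}(f)+\xi_v\Psi_{u,v}(g)\|_p\le C_0C_1\delta_u$ (here I use $\xi_u=-\xi_v$, which is available in the real case). Adding these via the triangle inequality yields $\|\Psi_{u,v}(f)\|_p\le \tfrac12 C_0C_1(\delta_u+\delta_v)$, hence $w_{u,v}\le(\tfrac{C_0C_1}{2})^p(\delta_u+\delta_v)^p$. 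Summing over the disagreement edges, using $(\delta_u+\delta_v)^p\le 2^{p-1}(\delta_u^p+\delta_v^p)$ together with the bounded degree $D$ (each vertex lies in at most $D$ edges), produces a bound of the form $\sum_{\text{disagreement edges}}w_{u,v}\le C_3\,\epsilon^p$ with $C_3=C_3(p,C_0,C_1,D)$.

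Next I would fix the global sign. Let $S$ be whichever of $V^+,V^-$ has the smaller total vertex weight, so that $\sum_{v\in S}w_v\le\tfrac12\sum_{v\in V_f}w_v$, and take $\xi$ to be the sign of the complementary (majority-weight) set, so that $S$ is exactly the set of vertices with $\xi_v\neq\xi$ and $\partial S$ is exactly the set of disagreement edges. Assuming $C_G(f)>0$ (otherwise the right-hand side of \eqref{eq:real_result} is infinite and there is nothing to prove), the definition \eqref{eq:cheeger} gives $\sum_{v\in S}w_v\le C_G(f)^{-1}\sum_{(u,v)\in\partial S}w_{u,v}\le C_G(f)^{-1}C_3\,\epsilon^p$. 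Finally I would estimate $\|\Phi(f)-\xi\Phi(g)\|_p^p=\sum_v\|\Phi_v(f)-\xi\Phi_v(g)\|_p^p$ by splitting $V$ into three groups: vertices in $V\setminus V_f$ (where $\Phi_v(f)=0$, so the term equals $\delta_v^p$), agreeing vertices in $V_f\setminus S$ (each at most $(C_0\delta_v)^p$), and disagreeing vertices in $S$, where $\|\Phi_v(f)-\xi\Phi_v(g)\|_p=\|\Phi_v(f)+\xi_v\Phi_v(g)\|_p\le 2\|\Phi_v(f)\|_p+C_0\delta_v=2w_v^{1/p}+C_0\delta_v$. The first two groups contribute at most $(1+C_0^p)\epsilon^p$, while the third contributes at most a constant (depending on $p,C_0$) times $\sum_{v\in S}w_v+\sum_{v\in S}\delta_v^p\le C_G(f)^{-1}C_3\,\epsilon^p+\epsilon^p$. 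Collecting these estimates and taking $p$-th roots with $(a+b)^{1/p}\le a^{1/p}+b^{1/p}$ yields the desired bound $C_2(1+C_G^{-1/p}(f))\,\epsilon$ with $C_2=C(p,G,C_0,C_1)$.

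I expect the main obstacle to be the bookkeeping that couples the sign-disagreement structure to the Cheeger constant: one must simultaneously (i) establish the boundary-edge estimate so that $\partial S$ carries only $O(\epsilon^p)$ weight, and (ii) select $S$ as the minority-weight class so that it satisfies the normalization constraint $\sum_{v\in S}w_v\le\tfrac12\sum_{v\in V_f}w_v$ built into the infimum \eqref{eq:cheeger}, while arranging the global sign $\xi$ so that $S$ is precisely the disagreement set and $\partial S$ precisely the disagreement edges. The rest is the routine tracking of the $p$-dependent constants.
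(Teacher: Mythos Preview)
Your proposal is correct and follows essentially the same route as the paper's proof: define local signs $\xi_v$ via local phase retrieval, take $S$ to be the minority-weight sign class, bound the weight of each disagreement edge by $O(\delta_u^p+\delta_v^p)$ via \eqref{eq:edges_dominated} (this is exactly the paper's Lemma~\ref{lem:2} specialised to $|\xi_u-\xi_v|=2$), apply the Cheeger constant to control $\sum_{v\in S}w_v$, and finish by the triangle-inequality decomposition $\|\Phi_v(f)+\xi_v\Phi_v(g)\|_p\le 2\|\Phi_v(f)\|_p+C_0\delta_v$ on the minority set. Your treatment of the vertices in $V\setminus V_f$ is in fact slightly more explicit than the paper's, but otherwise the arguments coincide step for step.
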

The proof of the above theorem is included in Section \ref{pf:real}, and the constant $C_2$ is specified in \eqref{eq:real_constant}. We remark that the proof of Theorem \ref{thm:real} relies on the partition induced by an assignment of a real-valued phase in $\{-1, 1\}$ per vertex, and it cannot be  adapted to the complex-valued setting. In the next section, we characterize the phase retrieval stability constant in the complex setting via a different, but  related, measure of connectivity, which is known as algebraic connectivity.

\subsection{Stable phase retrieval in complex Banach spaces}
We now discuss algebraic connectivity and its relation to the stability of LSCC measurement schemes in the complex setting. Let $\admissible$ be an LSCC measurement scheme, let  $f$ be a signal in $\B$,  and let $G_f=(V_f, E_f,w)$ be the corresponding weighted graph in \eqref{def:graph}.  The  (possibly infinite-dimensional) adjacency matrix of the graph $G_f$, denoted by $A_{G_f}$, is given by 
$$A_{G_f}(u,v) =\left\{ \begin{array}{cc}
w_{u,v} & \ {\rm  if} \  (u,v )\in E_f; \\
0 &  \ {\rm otherwise},  
\end{array}\right. $$
where $w_{u,v}, (u,v)\in E_f$ is given in \eqref{eq: graphweight}. The Laplacian matrix $L_{G_f}$ of the graph $G_f$ is given by 
\begin{equation}\label{eq: laplacian}
L_{G_f}= D_{G_f}-A_{G_f},
\end{equation} where $D_{G_f}$ is the diagonal matrix with $D_{G_f}(u,u)=\sum\limits_{v\in  V_f} A_{G_f}(u,v)$. The matrices $A_{G_f},D_{G_f}$, and $L_{G_f}$ are  bounded linear operators on the space of graph signals ${\bf z}=(z_v)_{v\in V_f}$ living on the vertex domain  $V_f$, which have bounded weighted norm
 $$ \ell^2(V_f,w)=\{\z| \, \sum_{v \in V_f} w_v |z_v|^2<\infty \}.  $$ 
Let ${\mathcal L}_{G_f}(\z)$ denote 
$${\mathcal L}_{G_f}({\bf z}):={\bf z}^*L_{G_f}{\bf z}= \sum_{(u, v) \in E_f}w_{u, v}|z_u-z_v|^2. $$
We have $\L_{G_f}({\bf 1})=0$ and $\L_{G_f}({\bf z}-c{\bf 1})=\L_{G_f}({\bf z})$, where $\bf 1$ is the unit vector and $c\in \CC$. Now we are ready to define the the \emph{algebraic connectivity} $\lambda_G(f)$. 

\begin{definition}\label{def:alg_con}
Let $\admissible$ be an LSCC measurement scheme. Let $f$ be a signal in $\B$, and let $G_f$ be the weighted graph induced by $f$ as in  \eqref{def:graph}. The  algebraic connectivity of $G_f$, denoted by $\lambda_G(f)$,  is defined as 
\begin{equation}\label{eq:lambda}
\lambda_G(f)=\inf_{\z \neq {\bf 0}, \langle \z,\one \rangle=0} \frac{ \L_{G_f}(\z)}{  \sum_{v\in V_f} w_v |z_v|^2}
\end{equation}
where  $\langle \cdot,\cdot \rangle $ denotes the inner product in $\ell^2(V_f,w)$.
\end{definition}
When $G_f$ is a finite graph, and $S$ is a $|V_f| \times |V_f|$ diagonal matrix with diagonal entries $S(v, v)=w_{v}$, 
 the algebraic connectivity is the second smallest eigenvalue of the normalized Laplacian matrix $S^{-1/2}L_{G_f}S^{-1/2}$. Similarly for infinite graphs,  $\lambda_G(f)$ is the minimum of the spectrum of the normalized Laplacian operator $S^{-1/2}L_{G_f}S^{-1/2} $, restricted to the subspace orthogonal to $S^{1/2}\one$.

The following theorem characterizes the phase retrieval stability constant  in the complex setting, which is related to the reciprocal of $\lambda^{1/2}_G(f)$.

\begin{theorem}\label{thm:complex}
Assume $\B$ is a Banach  space over $\CC$, and $\admissible$ is an LSCC measurement scheme with parameters $(C_0,C_1,D,p)$, $p=2$. Then there exists a  global constant $C_3=C_3(C_0,C_1,D)$  such that for all $f \in \B$,
	\begin{equation}\label{eq:complex_result}
	\min_{\xi\in \CC, |\xi|=1}  \|\Phi(f)-\xi\Phi(g)\|_2 \leq C_3\big(1+\lambda^{-1/2}_G(f)\big)  \| \, |\Phi(f)|-|\Phi(g)| \, \|_2 , \ \forall g \in \B
	\end{equation}
\end{theorem}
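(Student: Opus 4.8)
The plan is to synthesize a single global phase out of the local phases supplied by local phase retrieval, and to control how much these local phases disagree along the graph using the algebraic connectivity. Fix $f,g\in\B$ and abbreviate $\epsilon_v=\|\,|\Phi_v(f)|-|\Phi_v(g)|\,\|_2$ and $\epsilon=\|\,|\Phi(f)|-|\Phi(g)|\,\|_2$; since $\Phi=\cup_v\Phi_v$ with bounded degree, $\sum_v\epsilon_v^2\lesssim\epsilon^2$ up to a constant depending only on $G$. First, for each $v\in V_f$ the local stability assumption \eqref{const:local} yields a unimodular $\xi_v$ with $\|\Phi_v(f)-\xi_v\Phi_v(g)\|_2\le C_0\epsilon_v$. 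For an as-yet-unspecified global phase $\xi$ I would split
\begin{equation*}
\|\Phi(f)-\xi\Phi(g)\|_2^2\le\sum_v\|\Phi_v(f)-\xi\Phi_v(g)\|_2^2\le 2\sum_v\|\Phi_v(f)-\xi_v\Phi_v(g)\|_2^2+2\sum_v|\xi_v-\xi|^2\|\Phi_v(g)\|_2^2,
\end{equation*}
so that the first sum is $\le 2C_0^2\sum_v\epsilon_v^2\lesssim\epsilon^2$. Using $\|\Phi_v(g)\|_2^2\le 2w_v+2\epsilon_v^2$ and $|\xi_v-\xi|\le 2$, the second sum is bounded, up to a harmless multiple of $\sum_v\epsilon_v^2$, by $\sum_v w_v|\xi_v-\xi|^2$. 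It therefore remains to choose $\xi$ so as to control this weighted phase-discrepancy sum.

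Next I would bring in the algebraic connectivity. Taking $\xi$ to minimize $\sum_{v\in V_f}w_v|\xi_v-\xi|^2$ over unimodular phases, an elementary computation with $W=\sum_v w_v$ and $M=|\sum_v w_v\xi_v|$ shows the minimum equals $2(W-M)$ and is at most twice the weighted variance $\sum_v w_v|\xi_v-\bar\xi|^2=W(1-M^2/W^2)$, where $\bar\xi=W^{-1}\sum_v w_v\xi_v$ is the (in general non-unimodular) weighted mean; the minimizer is $\xi=\bar\xi/|\bar\xi|$, and any unimodular $\xi$ works if $\bar\xi=0$. Setting $\z=(\xi_v)_{v\in V_f}$, the vector $\z-\bar\xi\one$ is orthogonal to $\one$ in $\ell^2(V_f,w)$, so Definition \ref{def:alg_con} together with $\L_{G_f}(\z-\bar\xi\one)=\L_{G_f}(\z)$ gives
\begin{equation*}
\sum_v w_v|\xi_v-\bar\xi|^2\le\frac{1}{\lambda_G(f)}\L_{G_f}(\z-\bar\xi\one)=\frac{1}{\lambda_G(f)}\sum_{(u,v)\in E_f}w_{u,v}|\xi_u-\xi_v|^2.
\end{equation*}
Thus the whole problem reduces to estimating the edge energy $\sum_{(u,v)\in E_f}w_{u,v}|\xi_u-\xi_v|^2$ by $\epsilon^2$.

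The crux is the per-edge estimate, where conditional global connectivity enters. For $(u,v)\in E_f$ I would apply \eqref{eq:edges_dominated} to the signals $\bar\xi_u f-g$ and $\bar\xi_v f-g$, using unimodularity to rewrite $\|\Phi_u(\bar\xi_u f-g)\|_2=\|\Phi_u(f)-\xi_u\Phi_u(g)\|_2$:
\begin{equation*}
\|\Psi_{u,v}(\bar\xi_u f-g)\|_2\le C_1\|\Phi_u(\bar\xi_u f-g)\|_2=C_1\|\Phi_u(f)-\xi_u\Phi_u(g)\|_2\le C_0C_1\epsilon_u,
\end{equation*}
and likewise $\|\Psi_{u,v}(\bar\xi_v f-g)\|_2\le C_0C_1\epsilon_v$. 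Subtracting and using linearity of $\Psi_{u,v}$ with $|\bar\xi_u-\bar\xi_v|=|\xi_u-\xi_v|$ gives $|\xi_u-\xi_v|\,w_{u,v}^{1/2}=\|\Psi_{u,v}((\bar\xi_u-\bar\xi_v)f)\|_2\le C_0C_1(\epsilon_u+\epsilon_v)$, hence $w_{u,v}|\xi_u-\xi_v|^2\le 2C_0^2C_1^2(\epsilon_u^2+\epsilon_v^2)$. Summing over $E_f$ and using that each vertex meets at most $D$ edges, the edge energy is $\lesssim C_0^2C_1^2D\sum_v\epsilon_v^2\lesssim C_0^2C_1^2D\,\epsilon^2$. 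Collecting the three bounds and taking square roots yields the claim with $C_3$ of order $C_0\max(1,C_1\sqrt{D})$ after absorbing the degree/overlap constants, so the stability constant scales like $1+\lambda_G^{-1/2}(f)$.

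I expect the main obstacle to be the per-edge estimate: it is the only place where the gluing functionals $\Psi_{u,v}$ and \emph{both} local phases must be combined, and the device of feeding $\bar\xi_u f-g$ and $\bar\xi_v f-g$ separately into \eqref{eq:edges_dominated} and subtracting is exactly what converts local stability into a bound on phase differences across edges. A secondary point, specific to the complex setting, is that the natural minimizing mean phase $\bar\xi$ is not unimodular; the observation that the unimodular minimum is nevertheless comparable to the weighted variance is what lets the algebraic connectivity (rather than the Cheeger constant) govern the complex case and is responsible for the exponent $\tfrac12$ rather than $1$.
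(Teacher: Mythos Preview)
Your proof is correct and follows the same overall architecture as the paper's: obtain local phases $\xi_v$, establish the per-edge estimate $w_{u,v}|\xi_u-\xi_v|^2\le 2C_0^2C_1^2(\epsilon_u^2+\epsilon_v^2)$ via \eqref{eq:edges_dominated} (this is precisely the paper's Lemma~\ref{lem:2}), sum over edges to bound $\L_{G_f}(\xixi)$, and invoke algebraic connectivity. The one genuine difference is how you handle the fact that the weighted mean $\bar\xi$ is not unimodular. The paper introduces a separate Lemma~\ref{lem:toC} showing $\min_{|\xi|=1}\|\f-\xi\g\|\le\sqrt2\min_{c\in\CC}\|\f-c\g\|+\|\,|\f|-|\g|\,\|$, then decomposes $c_0\f-\g=(c_0-\bar\xi_v)\f+(\bar\xi_v\f-\g)$ so that the weight $w_v=\|\Phi_v(f)\|_2^2$ appears directly. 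You instead decompose $\f-\xi\g=(\f-\xi_v\g)+(\xi_v-\xi)\g$, bound $\|\Phi_v(g)\|_2^2\le 2w_v+2\epsilon_v^2$, and replace the unimodular minimum by the variance via the elementary identity $\min_{|\xi|=1}\sum_v w_v|\xi_v-\xi|^2=2(W-M)\le 2\cdot W(1-M^2/W^2)$. Your route is slightly more direct, avoiding the auxiliary lemma at the cost of the extra $\|\Phi_v(g)\|^2\approx w_v$ step; the paper's route isolates the ``unimodular vs.\ complex'' issue into a standalone statement that may be of independent use. Both yield constants of the same order $C_3\asymp C_0\max\{1,C_1\sqrt{D}\}$.
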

The proof of this theorem is given  in Subsection~\ref{sub:thm_complex}, and the constant $C_3$ is specified in \eqref{eq:complex_constant}.

\subsection{The Cheeger inequality}
For a fixed LSCC measurement scheme $\admissible$, we quantify the phase retrieval stability constant of a fixed  signal $f\in \B$ with an LSCC measurement scheme through two measures of the connectivity of the (possibly infinite) graph $G_f$: the Cheeger constant $C_G(f)$ and algebraic connectivity $\lambda_G(f) $. The relation between these measures of connectivity is given by the Cheeger inequality:  
\begin{equation}\label{ineq:cheeger}2C_G(f) \geq \lambda_G(f) \geq \frac{C_G^2(f)}{2D_N},  \end{equation} 
where $D_N=C_1^2D$ is the normalized degree.   
For the sake of completeness,  we prove  \eqref{ineq:cheeger} in the context of  infinite graphs with summable weights in Appendix~\ref{app:Cheeger}, though the proof essentially follows the standard proof (see e.g., \cite{chung2007four,chung1996laplacians}). 

 We note that the Cheeger inequality \eqref{ineq:cheeger} implies that the algebraic connectivity of a graph is zero if and only if the Cheeger constant is zero. This observation is useful for infinite graphs, where the connectivity of the graph does not imply that the algebraic connectivity/Cheeger constant will be positive.

For $p=2$, combining Theorem \ref{thm:real} and the first inequality in  \eqref{ineq:cheeger},  we can obtain a phase retrieval stability constant in the real setting which is dependent on the algebraic connectivity $\lambda_G(f)$ and is proportional to $\lambda^{-1/2}_G(f) $ as in the complex setting. On the other hand, combining Theorem~\ref{thm:complex} and the second inequality in \eqref{ineq:cheeger},  we can obtain a phase retrieval stability constant for the complex setting which depends on the Cheeger constant, but it is proportional to $C_G^{-1}(f)$, in contrast with its dependence on the Cheeger constant shown in Theorem \ref{thm:real}, which   is proportional to $C_G^{-1/2}(f)$. In Example~\ref{ex:finite},  we will see that it is not  possible to do better than $C_G^{-1}(f)$ in the complex vector space $\CC^d$, thus showing that  the different scaling between Theorem~\ref{thm:real} and Theorem~\ref{thm:complex} is not an artifact of our proof technique,  but rather points to an essential difference in phase retrieval stability between the real and complex settings.

\section{Examples}\label{sec:example}
In this section,  we consider two examples of LSCC measurement schemes, and study the phase retrieval stability constant. In the first example, we consider phase retrieval of signals in a finite dimensional vector space $F^d$ (where $F=\CC $ or $F=\RR $) from locally supported measurements, as suggested in \cite{iwen2019lower},  and apply Theorem \ref{thm:complex} to show the dependence of the stability constant on the dimension $d$ and the field $F$.  In the second example,  we apply Theorem \ref{thm:real} to show the stability of  real phase retrieval signals in an infinite dimensional shift-invariant space studied by \cite{chen2018phase, cheng2019phaseless}, and study cases where a nonzero stability constant exists. 

\subsection{Stability in a finite dimensional phase retrieval model}
In this example,  we consider  phase retrieval of signals in a finite dimensional space from locally supported measurements, as formulated in \cite{iwen2019lower}.
 \begin{example}\label{ex:finite}
	 Throughout this example, for $k\in \NN$, we use the notation $[k]=\{0,1,\ldots,k-1\} $ and $\oplus_k, \ominus_k$ for addition and subtraction modulo $k$ in the group $[k]$.
	
	 Let $\H$ be the Hilbert space $F^d$ where $F=\RR$ or $F=\CC $. We consider $\H$ as the space of  functions from $[d]$ to $F$.
	Let $a,L\in \NN$ such that $d=aL $,  and denote the subspace of functions supported on $[2a]$ by  $\H_0$. Let $\Phi_0\subseteq \H_0^*=\H_0$ be a phase retrieval frame for the space $\H_0$, with frame constants $(A,B)$.   
	
	For $k \in [d]$,  we define $S_k:\H \to \H$ to be the operator
	$$S_{k}f(n)=f(n \ominus_d k). $$
	For $\ell=0,1,\ldots,L-1$, we define
	$$\Phi_\ell=\{S_{\ell a}f| \quad f \in \Phi_0\} \text{ and } \B_\ell=\{S_{\ell a}f| \quad f \in \B_0\} .$$
	We study the phase retrievability of signals $f$ in $\B$ from phaseless measurements $|\Phi(f)| $, where $\Phi=\cup_{\ell \in [L] }\Phi_{\ell} $. This phase retrieval problem can be formulated as an LSCC measurement scheme $\admissible$, where  $G=(V,E)$ is a graph defined by
	\begin{equation}\label{eq:graph.ex} V=[L] \text{ and } E=\{(\ell,\ell')| \ \ell \oplus_L 1=\ell' \}, 
	\end{equation}
	the measurements $\Phi_\ell, \ell \in V$,  and subspace $\B_\ell, \ell \in V$ are defined above, and for every edge $(\ell,\ell')\in E$, we take $\Psi_{(\ell,\ell')}$ to be the evaluation functionals at the points in the intersection of the supports of $\H_\ell $ and $\H_{\ell'}$, that is 
	$$\Psi_{(\ell,\ell')}=\{\delta_k| k \in [2a]\oplus_d \ell a \text{ and } k \in [2a]\oplus_d \ell'a \}.$$
\end{example}
	\bigskip

Following \cite{iwen2019lower}, we devote the remainder of this subsection to the  phase retrieval stability constant of  signals $f$ in the set
	\begin{equation}\label{def:Bst}\B_{s,t}=\{f\in \B| \text{ the average of any } a  \text{ consecutive entries of } |f|^2 \text{ is between } s^2 \text{ and } t^2 \},
	\end{equation}
where $0<s \leq t $. Every signal $f \in \B_{s,t} $ is phase retrievable from $|\Phi(f)|$, as the graph $G_f$ in \eqref{def:graph} induced by $f$  has the same edges and vertices as the connected graph $G$ in \eqref{eq:graph.ex}. We are interested in studying constants $C_{s,t} $ such that  
\begin{equation}\label{eq:st_stable}
\min_{\xi \in F, |\xi|=1 } \| \Phi(f)-\xi \Phi(g) \|_2 \leq C_{s,t} \| \  |\Phi(f)|- |\Phi(g)| \  \|_2, \, \forall f,g \in \B_{s,t} .
\end{equation}
 We will focus on the dependence of the constant $C_{s,t}$ on $L$, for fixed  $s,t,a$ and $\Phi_0$. For the purpose of clarity,  we add the dependence on the dimension $L$ and  the field $F$, explicitly to the definition of $C_{s,t} $, and hitherto use the notation $C_{s,t}(F,L) $. 

The results in \cite{iwen2019lower} imply that constants  $C_{s,t}(F,L) $ in  \eqref{eq:st_stable}  grow at least like $L^{1/2} $, for both $F=\RR $ and $F=\CC$. The authors of \cite{iwen2020phase, preskitt2019admissible} suggest phase retrieval algorithms for this model in the complex setting, and prove the algorithms have a stability constant proportional to $d^2$. 
We will now show that our results in Theorem~\ref{thm:real} and Theorem~\ref{thm:complex} yield bounds $C_{s,t}(F,L) $ which are proportional to $L^{1/2}$ when $F=\RR$ and proportional to $L $ when $F=\CC$. Moreover, we show that this asymptotic dependence on $L$ is tight. In particular, this suggests  that the dependence of our stability constant for the real/complex case on the Cheeger constant/algebraic connectivity is not an artifact of the proof technique we use but rather due to a fundamental difference between the two cases.

We now explain how to obtain constant $C_{s,t}(F,L) $ from our results. For fixed $f \in \B_{s,t}$, Theorem \ref{thm:real} for $F=\RR$ and Theorem \ref{thm:complex} for $F=\CC $ provide stability constants which we denote by $C_{s,t}(f,\RR)$ and $C_{s,t}(f,\CC)$. These constants depend on $L$ only through the Cheeger constant when $F=\RR$ or the algebraic connectivity when $F=\CC$. 

In the following Lemma, we show that the Cheeger constant $C_G(f)$ and the algebraic connectivity $\lambda_G(f)$ are bounded below by   the  Cheeger constant $\widehat C_G$ and algebraic connectivity $\widehat \lambda_G$ of the  unweighted graph $G$. The proof is included in Subsection~\ref{proof:lowbound}.

\begin{proposition}\label{prop:lowbound}
	Let $\admissible$ an LSCC measurement scheme as in Example \ref{ex:finite}. Then for any signal $f\in \B_{s,t}$, we have 
	\begin{equation}\label{eq:uw_bound_w}
	C_G(f) \geq \frac{s^2}{2B^2t^2}\widehat C_G(f)  \text{ and }  \lambda_G(f)\geq \frac{s^2}{2B^2t^2}\widehat \lambda_G, 
	\end{equation}
	where $\widehat C_G(f)$ and $\widehat \lambda_G(f)$ are the Cheeger constant and algebraic connectivity of the cyclic graph with $L$ vertices. 
\end{proposition}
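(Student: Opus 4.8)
The plan is to first extract two-sided bounds on the weights of $G_f$ that hold uniformly for $f\in\B_{s,t}$, and then feed these into elementary comparison arguments for the Cheeger constant and the algebraic connectivity separately. For the weights, recall that in Example~\ref{ex:finite} the overlap of the supports of $\B_\ell$ and $\B_{\ell'}$ consists of exactly $a$ consecutive indices, so that $\Psi_{\ell,\ell'}$ is the collection of the corresponding $a$ evaluation functionals; hence $w_{\ell,\ell'}=\|\Psi_{\ell,\ell'}(f)\|_2^2$ is a sum of $|f(k)|^2$ over $a$ consecutive indices, and the definition of $\B_{s,t}$ gives $a s^2\le w_{\ell,\ell'}\le a t^2$. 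Likewise $\Phi_\ell$ is a frame with upper constant $B$ on the subspace $\B_\ell$ supported on $2a$ consecutive indices, so by \eqref{eq:proj} we have $w_\ell=\|\Phi_\ell(f)\|_2^2=\|\Phi_\ell(P_\ell f)\|_2^2\le B^2\|P_\ell f\|_2^2\le 2aB^2 t^2$. In particular every weight is strictly positive, so $G_f$ and $G$ share the same vertex and edge sets; I would use only the lower bound $a s^2$ on edge weights and the upper bound $2aB^2 t^2$ on vertex weights, which is why no dependence on $A$ survives.

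For the Cheeger constant, fix a set $S$ admissible in \eqref{eq:cheeger}. Bounding the numerator below and the denominator above with the weight estimates gives
\[
\frac{\sum_{(u,v)\in\partial S} w_{u,v}}{\sum_{v\in S} w_v}\ \ge\ \frac{a s^2\,|\partial S|}{2 a B^2 t^2\,|S|}\ =\ \frac{s^2}{2 B^2 t^2}\cdot\frac{|\partial S|}{|S|},
\]
so it remains to show $|\partial S|/|S|\ge\widehat C_G$. If $|S|\le L/2$ this is immediate from the definition of $\widehat C_G$ for the unweighted cycle. The one genuine subtlety is that the weighted admissibility constraint $\sum_{v\in S} w_v\le\tfrac12\sum_v w_v$ can be met by sets with $|S|>L/2$; here I would pass to the complement. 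Since $\partial S=\partial S^c$ and the constraint yields $\sum_{v\in S}w_v\le\sum_{v\in S^c}w_v$, replacing the denominator by the smaller $\sum_{v\in S^c}w_v$ only decreases the ratio, and now $|S^c|<L/2$ gives $|\partial S^c|/|S^c|\ge\widehat C_G$. Taking the infimum over $S$ then yields the first inequality.

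For the algebraic connectivity the same weight bounds are available, but the weighted orthogonality constraint $\sum_v w_v z_v=0$ in \eqref{eq:lambda} does not match the unweighted constraint $\sum_v z_v=0$ defining $\widehat\lambda_G$; reconciling the two is the main obstacle. I would sidestep it using the shift-invariant reformulation $\lambda_G(f)=\inf_{\z\ \mathrm{nonconstant}}\L_{G_f}(\z)\big/\min_{c}\sum_v w_v|z_v-c|^2$, together with its unweighted analogue for $\widehat\lambda_G$; this is valid because $\L_{G_f}$ is invariant under $\z\mapsto\z-c\one$ and the minimizing $c$ is precisely the (weighted) mean. For an arbitrary nonconstant $\z$ I would bound $\L_{G_f}(\z)\ge a s^2\sum_{(u,v)\in E}|z_u-z_v|^2$ from below, and bound the denominator above by evaluating at the unweighted mean $\bar z$: $\min_c\sum_v w_v|z_v-c|^2\le\sum_v w_v|z_v-\bar z|^2\le 2aB^2 t^2\sum_v|z_v-\bar z|^2=2aB^2 t^2\min_c\sum_v|z_v-c|^2$. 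Dividing produces a factor $\tfrac{s^2}{2B^2 t^2}$ times the unweighted quotient, and taking the infimum over $\z$ gives $\lambda_G(f)\ge\tfrac{s^2}{2B^2 t^2}\widehat\lambda_G$. The main work lies in setting up the shift-invariant formulation cleanly; once that is in place, both inequalities collapse onto the weight bounds of the first step.
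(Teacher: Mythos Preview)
Your argument is correct and essentially identical to the paper's: the same weight bounds $w_\ell\le 2aB^2t^2$ and $w_{\ell,\ell'}\ge as^2$, and the same shift-invariant reformulation $\lambda_G(f)=\inf_{\z}\max_c \L_{G_f}(\z)/\sum_v w_v|z_v-c|^2$. The only differences are that the paper writes the Cheeger constant with $\min\{\mathrm{vol}(S),\mathrm{vol}(S^c)\}$ in the denominator, which absorbs your case split on $|S|$ into a single line, and that in your complement step $\sum_{v\in S^c}w_v$ is the \emph{larger} of the two sums, not the smaller---though your conclusion that replacing the denominator by it decreases the ratio is exactly right.
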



 The graph $G$ is  a cyclic graph with $L$ points, for which it is known \cite{fiedler1973algebraic} that 
\begin{equation}\label{eq:eig}\widehat \lambda_G=2(1-\cos(\frac{2\pi}{L})) \sim \frac{4\pi}{L^2} \text{ and } \widehat C_G=\frac{2}{\lfloor L/2 \rfloor} \sim \frac{4}{L}.\end{equation}
Thus in the real case we have $C_{s, t}(\RR,L)\sim C_G^{-1/2}(f) \sim L^{1/2}$, while in the complex case we have  $C_{s, t}(\CC,L) \sim \lambda^{-1/2}_G(f) \sim L $.  

We now discuss the optimality of $C_{s,t}$, in terms of its asymptotic dependence on $L$.  
The argument in \cite{iwen2019lower} (which focuses on $F=\CC$ but applies for the case $F=\RR$ as well), implies that any bound $C_{s,t}(F,L)$ satisfying \eqref{eq:st_stable} will grow at least like $L^{1/2} $ when $F=\RR$ or $F=\CC$. This shows the optimality of our results for the real case. To show optimality in the complex case, we prove the following lemma which shows that in fact that any constant $C_{s,t}(\CC,L)$ must grow at least linearly in  $L$.  
%
\begin{proposition}\label{prop:optimal}
	Let $\admissible$ be the LSCC measurement scheme described in Example~\ref{ex:finite} with $F=\CC $. Let $(A,B)$ be the frame constants of $\Phi_0$ as in \eqref{eq:frame}, and $s,t $ be some numbers satisfying $0<s\le t$. Then there exist $f,g \in \B_{s,t}$ such that
	$$\min_{\xi \in F, |\xi|=1 } \| \Phi(f)-\xi \Phi(g) \|_2 \geq c_{s, t}(\CC,L) \| \  |\Phi(f)|- |\Phi(g)| \  \|_2, \, \forall f,g \in \B_{s,t} $$
	where
	$$c_{s, t}(\CC,L)=\frac{A}{B}\left(1-\cos\left(\frac{4\pi a}{L}\right)\right)^{-1/2}.$$
\end{proposition}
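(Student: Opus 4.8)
The plan is to exhibit one explicit pair $f,g\in\B_{s,t}$ that saturates the claimed ratio, taking $f$ to be a constant-modulus signal and $g$ a Fourier phase modulation of it. Fix any $c$ with $s\le c\le t$ and set $f(n)=c$ and $g(n)=c\,e^{2\pi i n/L}$ on $[d]$. Both have constant modulus $c$, so the average of any $a$ consecutive entries of $|f|^2$ and of $|g|^2$ equals $c^2\in[s^2,t^2]$, whence $f,g\in\B_{s,t}$. The intuition is that $e^{2\pi i n/L}$ changes by only $4\pi a/L$ across a single block of width $2a$, so it is \emph{nearly} constant on each $\B_\ell$ and the phaseless measurements of $f$ and $g$ are close, whereas globally the modulation winds around the cycle so that no single global phase can align all the blocks at once.

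First I would use the shift structure to reduce everything to block $0$. Evaluating the shifted functionals, substituting $m=n\ominus_d\ell a$, and using that $g$ is $d$-periodic (since $d/L=a\in\NN$), one gets $\Phi_\ell(f)=\Phi_0(f_0)$ and $\Phi_\ell(g)=e^{2\pi i\ell a/L}\Phi_0(g_0)$, where $f_0=c\,\one_{[2a]}$ and $g_0(m)=c\,e^{2\pi i m/L}$ for $m\in[2a]$; the key point is that the entire $\ell$-dependence of $\Phi_\ell(g)$ collapses into the single unimodular scalar $e^{2\pi i\ell a/L}$, independent of the functional. Consequently $|\Phi_\ell(f)|=|\Phi_0(f_0)|$ and $|\Phi_\ell(g)|=|\Phi_0(g_0)|$ for every $\ell$, so the right-hand side collapses to $\||\Phi(f)|-|\Phi(g)|\|_2^2=L\,\||\Phi_0(f_0)|-|\Phi_0(g_0)|\|_2^2$.

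Next I would evaluate the two sides. For the left-hand side, expanding $\|\Phi(f)-\xi\Phi(g)\|_2^2=\sum_{\ell\in[L]}\|\Phi_0(f_0)-\xi e^{2\pi i\ell a/L}\Phi_0(g_0)\|_2^2$, the cross term carries the factor $\sum_{\ell=0}^{L-1}e^{2\pi i\ell a/L}$, a geometric sum that vanishes whenever $L\nmid a$ (which holds for all $L>a$). The expression is therefore independent of $\xi$ and equals $L(\|\Phi_0(f_0)\|_2^2+\|\Phi_0(g_0)\|_2^2)$, so the minimum over $\xi$ is $\sqrt{L(\|\Phi_0(f_0)\|_2^2+\|\Phi_0(g_0)\|_2^2)}$, and the lower frame bound gives $\|\Phi_0(f_0)\|_2^2+\|\Phi_0(g_0)\|_2^2\ge A^2(\|f_0\|_2^2+\|g_0\|_2^2)=4A^2ac^2$. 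For the right-hand side, the upper frame bound gives $\||\Phi_0(f_0)|-|\Phi_0(g_0)|\|_2\le\|\Phi_0(f_0-g_0)\|_2\le B\|f_0-g_0\|_2$, and $\|f_0-g_0\|_2^2=2c^2\sum_{m=0}^{2a-1}(1-\cos(2\pi m/L))$. The clean constant follows by bounding the sum by its number of terms times its largest term: since $1-\cos$ is increasing on $[0,\pi]$ and $2\pi m/L\le 2\pi(2a-1)/L\le 4\pi a/L\le\pi$ once $L\ge 4a$, each summand is at most $1-\cos(4\pi a/L)$, whence $\|f_0-g_0\|_2^2\le 4ac^2(1-\cos(4\pi a/L))$. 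Dividing the two bounds shows the ratio of $\min_\xi\|\Phi(f)-\xi\Phi(g)\|_2$ to $\||\Phi(f)|-|\Phi(g)|\|_2$ is at least $\frac{A}{B}(1-\cos(4\pi a/L))^{-1/2}$, which is exactly $c_{s,t}(\CC,L)$.

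The step demanding the most care is the cancellation of the cross term on the left-hand side, together with the observation that the unimodular factor $e^{2\pi i\ell a/L}$ is precisely what forbids any single global $\xi$ from simultaneously aligning the per-block measurement vectors; this is the mechanism converting a small, $O\big((1-\cos(4\pi a/L))^{1/2}\big)$ per-block magnitude mismatch into an $O(1)$ global phase inconsistency, and it is genuinely complex, since a real $\pm1$ modulation cannot wind around the cycle in this way. The only bookkeeping to track is the mild regime of validity, namely $L\nmid a$ for the cancellation and $L\ge 4a$ for the largest-term estimate, both harmless in the asymptotic $L\to\infty$ with $a$ fixed in which the proposition is applied.
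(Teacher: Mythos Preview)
Your proof is correct and follows the same core idea as the paper: take $f$ constant and $g$ a slow unimodular Fourier modulation, then use root-of-unity cancellation to show the left side is $\Theta(\sqrt{L})$ independently of $\xi$, while on each block $g$ is nearly constant-phase so the magnitude difference is small.

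There are, however, some execution differences worth noting. The paper modulates at frequency $1/d$, choosing $g(k)=e^{2\pi i k/d}$, and bounds the left side in the \emph{signal} domain via the frame inequality and $\sum_{k\in[d]}q_{k,d}=0$; for the right side it picks a local phase $\xi_\ell$ per block and bounds $|1-\xi_\ell g(j)|$ pointwise. You instead modulate at frequency $1/L$, which makes the per-block phase shift exactly $e^{2\pi i\ell a/L}$ and lets you factor the entire computation through a single reference block $\Phi_0$. This buys you two things: (i) the geometric-sum cancellation on the left side becomes a clean statement about $L$ terms rather than $d$, and (ii) the right side collapses to a single block with no per-block phase choice needed. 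Your argument also reproduces the constant exactly as written in the proposition, with $4\pi a/L$ inside the cosine; the paper's own computation in fact produces $4\pi a/d=4\pi/L$, so your choice of frequency matches the stated formula more precisely (both give the same $\sim L$ asymptotics). The mild side conditions you flag, $L\nmid a$ and $L\ge 4a$, are indeed harmless in the regime of interest.
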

The proof is included in Subsection~\ref{proof:optimal}. Taylor expansion of the cosine function around $0$ shows that  $c_{s, t}(\CC,L)\sim L$.

\subsection{Stability in an infinite dimensional phase retrieval model}
In the previous example, we considered an LSCC measurement scheme for signals in  $\CC^d$ or $\RR^d$, and saw  that even under the assumption that the signals are uniformly bounded away from zero, the stability constant will  deteriorate as the dimension grows. We now consider a real infinite dimension phase retrieval model for which our methodology provides a stability constant which is proportional to $C_G^{-1/p}(f)$, and we will then discuss examples of signals $f$ for which $C_G(f)$ is (or isn't) strictly positive.
\begin{example}\label{ex:sis}
In this example we consider the real infinite dimensional phase retrieval problem discussed in \cite{chen2018phase, cheng2019phaseless}. Fix some $d\in \NN$ and $1\le p<\infty$, let $B_N$ be a continuous function supported in $[0,N]^d$, and define
\begin{equation*}V_p(B_N)=\{f | \quad f(x)=\sum_{k\in \ZZ^d} c_kB_N(x-k) \text{ for some } \c \in \ell^p(\ZZ^d) \}.  \end{equation*}
We assume that $V_p(B_N)$ satisfies the local linear independence property on any open set as defined in \cite{cheng2019phaseless}. This means that  the mapping $\ell^p(\ZZ^d) \ni \c \mapsto \sum_{k\in \ZZ^d} c_kB_N(x-k)  $ is injective, so that we can identify   $V_p(B_N)$  with  $\ell^p(\ZZ^d)$ and endow it with the $\ell^p$ norm 
$$\|f\|_{V_p(B_N)}:=\|\c\|_{\ell^p(\ZZ^d)}, \ f\in V_p(B_N), $$
 where $\bf c$ is the coefficient vector of the signal $f$. We further assume that phaseless measurements $|f(\gamma)|, \gamma\in \Gamma\subset [0, 1]^d$ are sufficient to determine the value of $f\in \B_0\subset V_p(B_N)$, where 
\begin{equation}\label{eq:B0}
\B_0=\{f | \quad f(x)=\sum_{k\in \ZZ^d} c_kB_N(x-k) \text{ where } c_k=0 \text{ if } k \not \in K_0  \} \end{equation}
and 
\begin{equation}\label{Ktheta.def0} K_{0}=\{k\in \ZZ^d| \ B_N(x-k)\neq 0 {\rm \ for \ some  \ } x \in [0, 1]^d\}.  \end{equation} 
We note that this assumption is equivalently to the assumption that the rows of the matrix
\begin{equation}\label{def:Bgamma}{\bf B}_\Gamma:=\big(B_N(\gamma-k)\big)_{\gamma\in \Gamma, k\in K_0} \end{equation} 
are a phase retrieval frame for the space $\RR^{\sharp K_0}$.

Now we are ready to define an LSCC measurement scheme in the space $V_p(B_N)$.  We let the graph $G$ be the lattice $\ZZ^d$, where $V=\ZZ^d$ and $(k ,k')\in E$ if and only if $\|k-k'\|_2=1$. For $\ell \in \ZZ^d$,  we set 
 \begin{equation}\label{eq:phi_l}\Phi_\ell=\{\delta_{\gamma+\ell}| \ \gamma\in \Gamma \} \end{equation} 
and  
$$\B_\ell=\{f \in V_p(B_N)|\quad f(x)=\sum_{k\in \ZZ^d} c_kB_N(x-k) \text{ where } c_k=0 \text{ if } k \not \in K_\ell \},  $$
where $K_\ell=K_0-\ell$.  By the shift-invariance of the space $V_p(B_N)$ and the choice of the set $\Gamma$,  the local phase retrievability assumptions on $\B_\ell, \ell\in \ZZ^d$ are satisfied, cf. \cite{chen2018phase, cheng2019phaseless}. For $(\ell,\ell')\in E $,  we define  $$\Psi_{\ell, \ell'}(f) = (c_k)_{k \in  K_{\ell'}\cap K_{\ell}},$$
 where $\bf c$ is the coefficient vector of the signal $f$. The quadruple $\admissible $ thus defined is an LSCC measurement scheme. 
\end{example}
\bigskip


For a signal $f \in V_p(B_N)$, Theorem \ref{thm:real} gives us a stability result for signals $f\in V_p(B_N) $. When $p=2$,  the  stability constant is explicitly given in the following corollary, which combines our results with the standard method of characterizing the constant $C_0$ in \eqref{const:local} via the  $\sigma$-strong property in  \cite{alaifari2017phase,BCMN}. Proof of the corollary is given in Subsection \ref{sub:cor}.
	
	\begin{corollary}\label{cor:sis}
		Let $V_2(B_N)$ be a real shift-invariant space generated by a continuous compactly supported function $B_N$. Assume that $V_2(B_N)$ is locally independent on all open sets and $\Gamma\subset [0, 1]^d$ is so chosen such that ${\bf B}_\Gamma= \big(B_N(\gamma-k)\big)_{\gamma\in \Gamma, k\in K_0}$  is a phase retrieval frame for $\RR^{\sharp K_0}$, with $K_0$ defined as in \eqref{Ktheta.def0}. 
	Set 
		\begin{equation}\label{def:sigma}  \sigma =\min_{\Gamma'\subset \Gamma} \max \big( \inf_{\|{\bf c}\|_2=1} (\sum_{\gamma\in \Gamma'} |\sum_{k\in K_0}c_kB_N(\gamma-k)|^2)^{1/2}, \inf_{\|{\bf c}\|_2=1} (\sum_{\gamma\in \Gamma\setminus\Gamma'} |\sum_{k\in K_0}c_kB_N(\gamma-k)|^2)^{1/2}\big).\end{equation}
		Then for any $f\in V(B_N)$, we have  
		\begin{equation}
		\min_{\xi\in\{\pm 1\}} \big(\sum_{\gamma\in \Gamma+\ZZ^d}|f(\gamma) - \xi g(\gamma)|^2\big)^{1/2}\le C_{V_2(B_N)} (1+C_G(f)^{-1/2})\Big( \sum_{\gamma\in \Gamma+\ZZ^d} \big| |f(\gamma)|-|g(\gamma)| \big|^2\Big)^{1/2}, \end{equation}
		where $C_{V_2(B_N)} =\max\{2 \sigma^{-1} (\sharp K_0)^{1/2}\lambda_{\max} ({\bf B}_\Gamma), 4\sqrt{2}\sigma^{-2}(\sharp K_0)\lambda_{\max} ({\bf B}_\Gamma)\} \}$, and $\lambda_{\max}({\bf B}_\Gamma)$ is the largest singular value of the matrix ${\bf B}_\Gamma$. 
	\end{corollary}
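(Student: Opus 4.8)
The plan is to recognize the statement as the specialization of Theorem~\ref{thm:real} to the LSCC measurement scheme $\admissible$ constructed in Example~\ref{ex:sis}, taken at $p=2$, and then to make the abstract constant $C_2=C(p,G,C_0,C_1)$ of \eqref{eq:real_constant} explicit by computing the structural parameters of this particular scheme. First I would record that the measurement family is $\Phi=\bigcup_{\ell\in\ZZ^d}\Phi_\ell$ with $\Phi_\ell=\{\delta_{\gamma+\ell}:\gamma\in\Gamma\}$; since $\Gamma\subset[0,1]^d$ the translates $\Gamma+\ell$ occupy disjoint unit cells, so $\Phi(f)=\big(f(\gamma)\big)_{\gamma\in\Gamma+\ZZ^d}$ with no repeated points. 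Consequently the two measurement norms in Theorem~\ref{thm:real} are exactly the $\ell^2(\Gamma+\ZZ^d)$ norms appearing in the corollary, and once the constant is identified the inequality is immediate. It therefore remains to compute $C_0$, $C_1$, the degree $D$, and the exhaustion equivalence constant, and to substitute.

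Next I would compute the local stability constant $C_0$, which is the technical heart of the argument. Under the identification $\B_0\cong\RR^{\sharp K_0}$ by coefficient vectors, the local measurement map is the linear map $\c\mapsto{\bf B}_\Gamma\c$. Given $f,g\in\B_0$ with coefficient vectors $\c_f,\c_g$, set $u=\Phi_0(f)={\bf B}_\Gamma\c_f$ and $v=\Phi_0(g)={\bf B}_\Gamma\c_g$, and let $\Gamma'\subseteq\Gamma$ be the set of indices on which $u$ and $v$ share the same sign. On $\Gamma'$ one has $|u_\gamma-v_\gamma|=\big||u_\gamma|-|v_\gamma|\big|$ and on $\Gamma\setminus\Gamma'$ one has $|u_\gamma+v_\gamma|=\big||u_\gamma|-|v_\gamma|\big|$, so that $\|{\bf B}_\Gamma(\c_f-\c_g)\|_{\ell^2(\Gamma')}^2+\|{\bf B}_\Gamma(\c_f+\c_g)\|_{\ell^2(\Gamma\setminus\Gamma')}^2=\big\||u|-|v|\big\|_2^2$. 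By the definition \eqref{def:sigma} of $\sigma$, evaluated on the partition $(\Gamma',\Gamma\setminus\Gamma')$, at least one of the two restricted lower frame bounds is $\ge\sigma$; in the first case $\|\c_f-\c_g\|_2\le\sigma^{-1}\big\||u|-|v|\big\|_2$ and in the second $\|\c_f+\c_g\|_2\le\sigma^{-1}\big\||u|-|v|\big\|_2$. Applying $\|{\bf B}_\Gamma\cdot\|_2\le\lambda_{\max}({\bf B}_\Gamma)\|\cdot\|_2$ and choosing the better of $\xi=\pm1$ then yields $\min_{\xi=\pm1}\|u-\xi v\|_2\le\sigma^{-1}\lambda_{\max}({\bf B}_\Gamma)\big\||u|-|v|\big\|_2$, so the local constant in \eqref{const:local} may be taken as $C_0=\sigma^{-1}\lambda_{\max}({\bf B}_\Gamma)$. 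Shift-invariance of the space and of $\Gamma$ makes this value the same for every vertex $\ell$. I expect this step to be the main obstacle, since it is the only place where genuine phase-retrieval content (the $\sigma$-strong complement property of \cite{BCMN,alaifari2017phase}) enters; everything else is bookkeeping.

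The remaining parameters are routine. For the edge constant, $\Psi_{\ell,\ell'}(f)=(c_k)_{k\in K_\ell\cap K_{\ell'}}$ has norm at most $\|P_\ell f\|=\|(c_k)_{k\in K_\ell}\|_2$, while the lower frame bound $A$ of ${\bf B}_\Gamma$ gives $\|\Phi_\ell(f)\|_2\ge A\|P_\ell f\|$; hence \eqref{eq:edges_dominated} holds with $C_1=A^{-1}$, and the symmetric estimate at $\ell'$ gives the same bound. Since taking $\Gamma'=\emptyset$ in \eqref{def:sigma} forces $\sigma\le A$, I would record the cruder but $\sigma$-only bound $C_1\le\sigma^{-1}$. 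The graph $G$ is the integer lattice $\ZZ^d$, of degree $D=2d$, and the exhaustion constant is obtained by noting that each coefficient index $k$ lies in exactly $\sharp K_0$ of the translated windows $K_\ell$, so $\sum_{\ell\in\ZZ^d}\|P_\ell f\|_2^2=\sharp K_0\,\|\c\|_2^2=\sharp K_0\,\|f\|_{V_2(B_N)}^2$; the two norms in the exhaustion assumption are thus equivalent with ratio $(\sharp K_0)^{1/2}$.

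Finally I would substitute $C_0=\sigma^{-1}\lambda_{\max}({\bf B}_\Gamma)$, $C_1\le\sigma^{-1}$, the degree $D=2d$, and the exhaustion factor $(\sharp K_0)^{1/2}$ into the explicit formula \eqref{eq:real_constant} for $C_2$ and simplify. The term of \eqref{eq:real_constant} that is linear in $C_0$ and carries one exhaustion factor collapses to $2\,\sigma^{-1}(\sharp K_0)^{1/2}\lambda_{\max}({\bf B}_\Gamma)$, while the term that additionally carries the edge constant $C_1$ and a second exhaustion factor collapses to $4\sqrt2\,\sigma^{-2}(\sharp K_0)\lambda_{\max}({\bf B}_\Gamma)$; their maximum is $C_{V_2(B_N)}$. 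Theorem~\ref{thm:real} then delivers the stated inequality, with the entire signal-dependence confined to the factor $1+C_G(f)^{-1/2}$.
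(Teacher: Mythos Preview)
Your approach is essentially the paper's: compute $C_0$ from the $\sigma$-strong complement property, bound $C_1$ via the local lower frame bound, and invoke Theorem~\ref{thm:real}. In fact your $C_0=\sigma^{-1}\lambda_{\max}({\bf B}_\Gamma)$ and $C_1\le\sigma^{-1}$ are \emph{sharper} than the paper's, which inserts a crude Cauchy--Schwarz/trace step to obtain $C_0=\sqrt{2}\,(\sharp K_0)^{1/2}\sigma^{-1}\lambda_{\max}({\bf B}_\Gamma)$ and $C_1=\sqrt{2}\,\sigma^{-1}$; \emph{that} is where the $\sharp K_0$ and the extra $\sqrt{2}$ in the stated $C_{V_2(B_N)}$ actually originate.

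Your final paragraph, however, misattributes these factors to ``exhaustion.'' The formula~\eqref{eq:real_constant} carries no exhaustion factor at all: Theorem~\ref{thm:real} is stated and proved entirely in the measurement norm $\|\Phi(\cdot)\|_p$, and since here the $\Phi_\ell$ are pairwise disjoint this norm already coincides with the $\ell^2(\Gamma+\ZZ^d)$ norm in the corollary, so the exhaustion equivalence never enters the constant. Plugging your own values of $C_0,C_1,D$ directly into~\eqref{eq:real_constant} yields a constant strictly smaller than the stated $C_{V_2(B_N)}$, which still proves the corollary a~fortiori; only the claimed \emph{exact} match of constants, and the mechanism you propose for it, is incorrect.
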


	In contrast with finite graphs, the Cheeger constant $C_G(f)$  for an infinite  graph $G_f$  may be zero even if $G_f $ is connected. 	Let $\admissible$ be the LSCC measurement defined in Example \ref{ex:sis}. The following proposition shows that if $f$ has a  coefficient sequence $\c $ which decays exponentially and monotonously in $|k|, k\in \ZZ$,   the graph $G_f$ induced by the   signal $f$  has a positive Cheeger constant $C_G(f)$, and thus stable phase retrieval holds for such signal $f$.  In contrast if $\c$ exhibits monotone polynomial decay in $|k|, k\in \ZZ$, then the Cheeger constant $C_G(f)$   will be zero. The proof is included in  Subsection \ref{sub:cheeger_infinite}.


	\begin{proposition}\label{prop:example}
		Let $\admissible$ be the LSCC measurement defined in Example \ref{ex:sis},
		Let $f(x)=\sum_{k\in \ZZ} c_kB_N(x-k) $ where  $\c \in \ell^p(\ZZ)$, and let $G_f$ be the graph induced by $G$ and $f$. 
		\begin{enumerate} 
			\item If there exists some $\beta>0$ such that 
			\begin{equation}\label{prop_ex:1}|c_k|^p=e^{-\beta|k|}, \ \forall k\in \ZZ,  \end{equation}
			then the Cheeger constant of $G_f$ is positive.
			\item If there exists some $\beta>1$ such that 
			\begin{equation}\label{prop_ex:2}|c_k|^p=(1+|k|)^{-\beta}, \  \forall k\in \ZZ, \end{equation}
			then the Cheeger constant of $G_f$ is zero. 
		\end{enumerate} 
	\end{proposition}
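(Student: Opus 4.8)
The plan is to turn each decay hypothesis on $\c$ into two-sided estimates on the weights of $G_f$ and then read off the Cheeger constant from the one‑dimensional geometry of the path graph $\ZZ$. First I would record the weights. Since the norm on $V_p(B_N)$ is the $\ell^p$ norm of the coefficients and, by shift‑invariance, $\Phi_\ell$ is a $p$-frame for $\B_\ell$ with fixed constants $(A,B)$, the vertex weight obeys
$$ A^p \sum_{j\in K_\ell} |c_j|^p \;\le\; w_\ell=\|\Phi_\ell(f)\|_p^p \;\le\; B^p \sum_{j\in K_\ell} |c_j|^p, $$
while the edge weight is exactly $w_{\ell,\ell+1}=\|\Psi_{\ell,\ell+1}(f)\|_p^p=\sum_{j\in K_\ell\cap K_{\ell+1}}|c_j|^p$. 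Both $K_\ell$ and $K_\ell\cap K_{\ell+1}$ are integer intervals of bounded length $O(N)$ sitting near the index $\ell$ (the reflection $\ell\mapsto-\ell$ from the shift convention is immaterial since $|c_j|^p$ is even). Writing $a_k$ for the common decay profile and using its monotonicity, this gives $w_\ell \asymp a_{|\ell|}$ and $w_{\ell,\ell+1}\asymp a_{|\ell|}$, where $\asymp$ denotes equality up to fixed constants depending only on $A,B,N,\beta,p$. In both hypotheses $\c\in\ell^p$, so $T:=\sum_v w_v<\infty$ and every vertex carries positive weight, i.e. $G_f$ is the full path graph on $\ZZ$.

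For the polynomial case I would simply exhibit sets driving the Cheeger quotient to zero. Take the tails $S_n=\{k\ge n\}$; the boundary is the single edge $(n-1,n)$, so the numerator is $w_{n-1,n}\asymp a_n\asymp n^{-\beta}$, while the volume is $\sum_{k\ge n} w_k \asymp \sum_{k\ge n}(1+k)^{-\beta}\asymp n^{1-\beta}$. Hence the quotient is $\asymp n^{-\beta}/n^{1-\beta}=n^{-1}\to 0$. For $n$ large one has $\mathrm{vol}(S_n)\to 0\le\tfrac12 T$, so the $S_n$ are admissible, and taking the infimum gives $C_G(f)=0$.

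For the exponential case I would instead prove a uniform bound $\mathrm{bd}(S)\ge c\,\mathrm{vol}(S)$ for every admissible $S$, where $\mathrm{bd},\mathrm{vol}$ abbreviate the numerator and denominator of the Cheeger quotient. Decompose $S$ into its maximal runs $R_i$; since distinct runs are separated by gaps, both quantities are additive over runs, so $\mathrm{bd}(S)/\mathrm{vol}(S)\ge\inf_i \mathrm{bd}(R_i)/\mathrm{vol}(R_i)$ and it suffices to bound each run. A run $R=[a,b]$ (possibly semi-infinite) lying in one tail, say $1\le a\le b$, has $\mathrm{bd}(R)\ge w_{a-1,a}\asymp e^{-\beta a}$ and, by the geometric sum, $\mathrm{vol}(R)\asymp\sum_{k=a}^b e^{-\beta k}\le e^{-\beta a}/(1-e^{-\beta})$, so its ratio is bounded below by a fixed constant.

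The only delicate run is the one containing the origin, $R_0=[a,b]$ with $a\le 0\le b$, and this is where I expect the real work to lie. Admissibility gives $\mathrm{vol}(R_0)\le\tfrac12 T$, so $R_0$ omits at least $\tfrac12 T$ of the total weight; this omitted mass lies in the complementary tails $\{k<a\}$ and $\{k>b\}$, so at least one of them, say $\sum_{k<a} w_k$, exceeds $\tfrac14 T$. Since $\sum_{k<a} w_k\asymp e^{-\beta|a|}$ decays geometrically in $|a|$, this lower bound forces $|a|\le A_0$ for a constant $A_0=A_0(\beta,N,p)$; consequently $\mathrm{bd}(R_0)\ge w_{a-1,a}\asymp e^{-\beta|a|}\ge e^{-\beta A_0}$ is bounded below while $\mathrm{vol}(R_0)\le T$, so the ratio of $R_0$ is again bounded below. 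Taking $c$ to be the minimum of the two constants yields $C_G(f)\ge c>0$. The crux — and the main obstacle — is precisely this balancing: the naive tail computation shows tails have ratio $\asymp 1$, but a complete proof must rule out sets that swallow the heavy central vertices, and the role of the constraint $\sum_{v\in S}w_v\le\tfrac12 T$ is exactly to keep the central run from having a long arm, hence to keep its boundary weight bounded below.
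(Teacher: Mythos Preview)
Your argument is correct and follows essentially the same route as the paper. Both proofs reduce to intervals on the line: the paper states in one line that it suffices to consider connected $S$, while you make this explicit via the run decomposition and the mediant inequality; the polynomial case is then handled identically by testing right tails. For the exponential case the only real difference is in the treatment of an interval straddling the origin: the paper bounds the denominator by the \emph{complement} volume, obtaining numerator and denominator both $\asymp e^{-\beta|k|}+e^{-\beta|\ell|}$, whereas you use the admissibility constraint $\mathrm{vol}(R_0)\le T/2$ to pin one endpoint at bounded distance from~$0$ and then compare a fixed lower bound on the boundary against the trivial upper bound $T$ on the volume. Both are valid; the paper's version yields a cleaner explicit constant, while yours makes more transparent why the half-volume constraint is the mechanism preventing the central run from being a counterexample.
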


\section{Proofs of main theorems}\label{sec:proof}
In this section, we include the proofs for Theorem \ref{thm:real} and Theorem \ref{thm:complex}. We begin with some notation and computations which will be used for both proofs. We use $\f$ and $|\f|$ to denote the sequences
	$\f=\Phi(f) \text{ and } |\f|=|\Phi(f)|  $ in $\ell^p(\Phi)$, so that   \eqref{eq:real_result} and \eqref{eq:complex_result}   can be rewritten in terms of bounding $\| \f-\g \|_{\ell^p(\Phi)}$ by $\|\, |\f|-|\g| \,\|_{\ell^p(\Phi)}$.
	For any $v \in V_f$ we also use $\|\cdot\|_{\ell^p(\Phi_v)} $ to denote 
	$$\|\f\|_{\ell^p(\Phi_v)}=\|\Phi_v(f)\|_p. $$
The proof of both our main theorems relies on the following lemma:

		\begin{lemma}\label{lem:2}
	Let $\admissible$ be an LSCC measurement scheme with parameters $(C_0,C_1,D,p)$, $1\le p< \infty$, and $G_f=(V_f, E_f, w)$ be the weighted graph associated with the signal $f\in \B$.  For $({u,v})\in E_f$, let $\xi_u $ and $\xi_v$ be unimodular constants which minimize $\xi \mapsto \|\f-\xi \g\|_{\ell_p(\Phi_v)} $ and $\xi \mapsto \|\f-\xi \g\|_{\ell_p(\Phi_u)} $ respectively. Then
	 \begin{equation}\label{eq:edgeAndNorm}
	|{\xi}_u-\xi_v|^p w_{{u,v}}\leq 2^{p-1}C_0^pC_1^p\Big(\| \,|\f|-|\g| \,\|_{\ell^p(\Phi_v)}^p+\| \,|\f|-|\g| \,\|_{\ell^p(\Phi_u)}^p\Big). \end{equation}
\end{lemma}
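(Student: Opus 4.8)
The plan is to bound the edge quantity $|\xi_u - \xi_v|^p w_{u,v}$ by transporting it through the gluing functionals $\Psi_{u,v}$ and then invoking the two structural assumptions of the LSCC scheme. Since $w_{u,v} = \|\Psi_{u,v}(f)\|_p^p$, it suffices to control $|\xi_u - \xi_v|\,\|\Psi_{u,v}(f)\|_p$ and raise to the $p$-th power only at the very end. The starting observation is that $\xi_v$ and $\xi_u$, being the minimizers over unimodular scalars of $\xi \mapsto \|\f - \xi\g\|_{\ell^p(\Phi_v)}$ and $\xi \mapsto \|\f - \xi\g\|_{\ell^p(\Phi_u)}$, make the phase-aligned differences $f - \xi_v g$ and $f - \xi_u g$ small on $\Phi_v$ and $\Phi_u$ respectively. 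The idea is to express the phase discrepancy $(\xi_u - \xi_v)f$ in terms of these two small quantities.

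First I would record the algebraic identity
$$(\xi_v - \xi_u) f = \xi_v (f - \xi_u g) - \xi_u (f - \xi_v g),$$
which holds because the cross terms involving $\xi_v\xi_u g$ cancel. Applying the linear functionals $\Psi_{u,v}$, taking $\ell^p$ norms, using the triangle inequality, and using $|\xi_u| = |\xi_v| = 1$ gives
$$|\xi_u - \xi_v|\,\|\Psi_{u,v}(f)\|_p \le \|\Psi_{u,v}(f - \xi_u g)\|_p + \|\Psi_{u,v}(f - \xi_v g)\|_p.$$
Next I would apply the conditional global connectivity assumption \eqref{eq:edges_dominated}, bounding the first term by the $\Phi_u$-domination $\|\Psi_{u,v}(\cdot)\|_p \le C_1\|\Phi_u(\cdot)\|_p$ and the second by the $\Phi_v$-domination; pairing each $\Psi$-term with the vertex at which its aligned difference is small is the crucial choice. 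This yields
$$|\xi_u - \xi_v|\,\|\Psi_{u,v}(f)\|_p \le C_1\big(\|\f - \xi_u\g\|_{\ell^p(\Phi_u)} + \|\f - \xi_v\g\|_{\ell^p(\Phi_v)}\big).$$

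To finish, I would invoke local phase retrievability \eqref{const:local}. Since $f,g$ need not lie in $\B_u$ or $\B_v$, here I use the projection identity \eqref{eq:proj}: because $\phi\circ P_v = \phi$ for $\phi \in \Phi_v$, we have $\Phi_v(f) = \Phi_v(P_v f)$ and $\Phi_v(g) = \Phi_v(P_v g)$ with $P_v f, P_v g \in \B_v$, so \eqref{const:local} applies to $P_v f, P_v g$ with the same constant $C_0$; likewise at $u$. Because $\xi_v$ is exactly the minimizer appearing in that inequality, $\|\f - \xi_v\g\|_{\ell^p(\Phi_v)} \le C_0\,\||\f| - |\g|\|_{\ell^p(\Phi_v)}$, and similarly for $u$. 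Substituting and then applying the convexity bound $(a+b)^p \le 2^{p-1}(a^p + b^p)$ produces exactly \eqref{eq:edgeAndNorm}. The hard part will be conceptual rather than computational: choosing the decomposition in the algebraic identity so that the unimodular factors cancel and so that each resulting term matches — via the correct one of the two domination inequalities in \eqref{eq:edges_dominated} — the vertex at which it is controlled. Once this pairing is in place, the reduction to $\B_v$ through the projection and the final power inequality are routine.
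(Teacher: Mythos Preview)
Your proof is correct and follows essentially the same route as the paper: both arguments write the phase discrepancy times $\Psi_{u,v}(f)$ as a sum of two terms each equal (up to a unimodular factor) to $\Psi_{u,v}(f-\xi_w g)$ for $w\in\{u,v\}$, then apply \eqref{eq:edges_dominated} and \eqref{const:local} before the final power inequality. Your explicit use of the projection identity \eqref{eq:proj} to justify applying \eqref{const:local} to general $f,g\in\B$ is a point the paper leaves implicit.
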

\begin{proof}
	Fix some $({u,v}) \in E_f$. Then by \eqref{eq:edges_dominated} we have
	\begin{align*}
	 \|\bar{\xi}_v\Psi_{{u,v}}(f)-\Psi_{{u,v}}(g)\|_p& =  \|\bar{\xi}_v\Psi_{{u,v}}(f-\xi_v g)\|_p  \le C_1 \|\Phi_v(f-\xi_v g)\|_p \nonumber\\
	 &\leq C_0C_1\|\, |\f|-|\g| \, \|_{\ell^p(\Phi_v)}. 
	\end{align*}
	Similarly, we have 
		\begin{equation*}
	 \|\Psi_{{u,v}}(g)-\bar{\xi}_u\Psi_{{u,v}}(f)\|_p \leq C_0C_1\|\ |\f|-|\g|\ \|_{\ell^p(\Phi_u)}.  		\end{equation*}
Applying the triangle inequality we obtain
	$$  |{\xi}_v-\xi_u|\| \Psi_{{u,v}} (f)\|_p\leq C_0C_1\Big(\| \, |\f|-|\g| \, \|_{\ell^p(\Phi_v)}+\| \, |\f|-|\g| \, \|_{\ell^p(\Phi_u)}\Big). $$	
 We now obtain \eqref{eq:edgeAndNorm} by taking the last equation to the power of $p, 1\le p<\infty$ and using the inequality
\begin{equation*}
(|a|+|b|)^p \leq 2^{p-1}(|a|^p+|b|^p). 
\end{equation*}
\end{proof}

%
%
	\subsection{Proof of Theorem~\ref{thm:real}}\label{pf:real}

In this subsection, we give the proof of Theorem \ref{thm:real}, which provides the phase retrieval stability constant in the real setting. Fix some $f,g \in \B $.  We define the set 
	$$ S_1=\{v \in V_f | \ \|\f - \g\|_{\ell^p(\Phi_v)}\le \|\f+\g\|_{\ell^p(\Phi_v)}\} $$
	and 
	$$ S_2= V_f \setminus S_1= \{v \in V_f | \ \|\f - \g\|_{\ell^p(\Phi_v)}> \|\f+\g\|_{\ell^p(\Phi_v)}\}.$$
Since the inequality we want to prove holds for $g$ if and only if it holds for $-g$, we can assume	without loss of generality that 
\begin{equation}\label{peq: cheeger1}\sum_{v\in S_1}w_v \ge \sum_{v\in S_2}w_v. \end{equation} 
For $v \in V$,  we define $\xi_v=1$ if $v \in S_1$ and $\xi_v=-1$ otherwise, so that $\xi_v$ is a unimodular constant minimizing $ \|f-\xi g\|_{\ell^p(\Phi_v)}$ as required in the conditions of Lemma~\ref{lem:2}.

Sine $\admissible$ is an LSCC measurement scheme, we have 
 \begin{eqnarray}\label{peq: chain1}
  \min_{\xi \in \{-1,1\}} \|\f-\xi \g\|_{\ell_p(\Phi)}^p&\leq&  \|\f- \g\|_{\ell_p(\Phi)}^p \nonumber\\
  &=&\sum_{v\in S_1}\|\f-\g\|_{\ell^p(\Phi_v)}^p+\sum_{v\in S_2}\|2\f-(\f+\g)\|_{\ell^p(\Phi_v)}^p\nonumber\\
  &{\le}& 2^{p-1}\big( \sum_{v\in S_1}\|\f-\g\|_{\ell^p(\Phi_v)}^p  +\sum_{v\in S_2}\|\f+\g)\|_{\ell^p(\Phi_v)}^p\\
  &+&\sum_{v\in S_2}\|2\f \|_{\ell^p(\Phi_v)}^p \big) \nonumber\\
&\leq& 2^{p-1}C_0^p \| \, |\f| -|\g| \,  \|_{\ell^p(\Phi)}^p +2^{2p-1}\sum_{v\in S_2} \| \f \|_{\ell^p(\Phi_v)}^p. \nonumber
\end{eqnarray}
 By \eqref{peq: cheeger1} and  the definition of the Cheeger constant, we have that 
\begin{eqnarray}\label{peq: chain2}
\sum_{v\in S_2} \| \f \|_{\ell^p(\Phi_v)}^p&\le& C_G^{-1}(f) \sum_{(u, v) \in \partial S_2}   w_{u,v}\nonumber \\
  &=& C_G^{-1}(f)2^{-p} \sum_{({u,v}) \in \partial S_2}  |\xi_v-\xi_u|^p w_{u,v}\nonumber \\ 
  &\stackrel{ \text{Lem.~\ref{lem:2}}}{\leq}& 2^{-1}C_G^{-1}(f) C_0^p C_1^p \sum_{({u,v}) \in \partial S_2} \Big(\| \,|\f|-|\g| \,\|_{\ell^p(\Phi_v)}^p+\| \,|\f|-|\g| \,\|_{\ell^p(\Phi_u)}^p\Big)\nonumber \\
    &\le& 2^{-1}D C_G^{-1}(f) C_0^pC_1^p\| \,|\f|-|\g| \,\|_{\ell^p(\Phi)}^p.
\end{eqnarray}
Combining  \eqref{peq: chain1} and \eqref{peq: chain2}, we have 
\begin{align*}
 \min_{\xi \in \{-1,1\}} \|\f-\xi \g\|_{\ell_p(\Phi)}^p& \leq \left[ 2^{p-1}C_0^p+2^{2p-2}DC_G(f)^{-1}C_1^pC_0^p \right]  \| \,|\f|-|\g|  \,\|_{\ell_p(\Phi)}^p \\ 
 &\le C_2^p(1+C_G(f)^{-1}) \| \,|\f|-|\g| \, \|_{\ell_p(\Phi)}^p\\
 &\le C_2^p(1+C_G(f)^{-1/p})^p \| \,|\f|-|\g| \, \|_{\ell_p(\Phi)}^p, 
	\end{align*}
	where 
	\begin{equation}\label{eq:real_constant}
	C_2^p= \max\{2^{p-1}C_0^p, 2^{2p-2}DC_1^pC_0^p\}.
	\end{equation} 
	This completes the proof of Theorem~\ref{thm:real}. \qedsymbol

\subsection{Proof of Theorem \ref{thm:complex}}\label{sub:thm_complex}
In this subsection, we give the proof of Theorem \ref{thm:complex} which provides the phase retrieval stability constant in the complex setting. 
The following technical  lemma is crucial for the proof of Theorem \ref{thm:complex}, as it essentially enables us to replace $\min_{\xi \in \CC, |\xi|=1}\|\f -\xi\g \|_{\ell^2(\Phi)}$ with \\ $\min_{c \in \CC} \|\f -c\g \|_{\ell^2(\Phi)}$. 
	\begin{lemma}\label{lem:toC}
		For all $\f,\g \in \ell^2(\Phi)  $,
		$$\min_{\xi \in \CC, |\xi|=1} \|\f-\xi\g \|_{\ell^2(\Phi)}\leq \sqrt{2} \min_{c\in \CC} \|\f-c\g \|_{\ell^2(\Phi)}+\| \,|\f|-|\g| \,\|_{\ell^2(\Phi)}. $$
	\end{lemma}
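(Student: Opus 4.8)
The plan is to turn the minimization over the unit circle into a two-dimensional geometry problem via orthogonal projection in the (complex) Hilbert space $\ell^2(\Phi)$. If $\g=\mathbf{0}$ the statement is immediate, so I assume $\g\neq\mathbf{0}$ and set $a=\|\f\|_{\ell^2(\Phi)}$, $b=\|\g\|_{\ell^2(\Phi)}$, and $E=\|\,|\f|-|\g|\,\|_{\ell^2(\Phi)}$. Let $\pi$ be the orthogonal projection of $\f$ onto the complex line $W=\{c\g:c\in\CC\}$, and let $M=\min_{c\in\CC}\|\f-c\g\|_{\ell^2(\Phi)}=\|\f-\pi\|_{\ell^2(\Phi)}$ denote the quantity appearing on the right-hand side.

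First I would record the key identity. Since $\f-\pi\perp W$ while $\xi\g\in W$ for every unimodular $\xi$, the Pythagorean theorem gives $\|\f-\xi\g\|_{\ell^2(\Phi)}^2=M^2+\|\pi-\xi\g\|_{\ell^2(\Phi)}^2$. The set $\{\xi\g:|\xi|=1\}$ is a circle of radius $b$ centered at the origin inside the real $2$-plane $W$, and $\pi\in W$, so minimizing the second term over the circle is exactly the planar distance from $\pi$ to that circle, namely $|\,|\pi|-b\,|$, where I write $|\pi|:=\|\pi\|_{\ell^2(\Phi)}$. Hence $\min_{|\xi|=1}\|\f-\xi\g\|_{\ell^2(\Phi)}^2=M^2+(|\pi|-b)^2$. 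Applying Pythagoras once more to $\f=\pi+(\f-\pi)$ yields $|\pi|=\sqrt{a^2-M^2}$, and in particular $a-M\le|\pi|\le a$.

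The crux is then the radius-mismatch bound $|\,|\pi|-b\,|\le M+E$. Here I use that the $\ell^2$ norm sees only magnitudes, so $a=\|\,|\f|\,\|_{\ell^2(\Phi)}$ and $b=\|\,|\g|\,\|_{\ell^2(\Phi)}$, and the reverse triangle inequality then gives $|a-b|\le E$. Combining this with the two projection bounds: from $|\pi|\le a$ I get $|\pi|-b\le a-b\le E$, and from $|\pi|\ge a-M$ I get $b-|\pi|\le(b-a)+M\le E+M$; together these yield $|\,|\pi|-b\,|\le M+E$.

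Putting the pieces together, $\min_{|\xi|=1}\|\f-\xi\g\|_{\ell^2(\Phi)}^2=M^2+(|\pi|-b)^2\le M^2+(M+E)^2=2M^2+2ME+E^2\le 2M^2+2\sqrt{2}\,ME+E^2=(\sqrt{2}\,M+E)^2$, and taking square roots gives the claim. The only real subtlety is the radius-mismatch estimate $|\,|\pi|-b\,|\le M+E$ together with the final absorption of $M^2+(M+E)^2$ into $(\sqrt{2}\,M+E)^2$, which is precisely where the constant $\sqrt{2}$ (rather than $1$) is forced; the reduction to planar geometry itself is routine Hilbert-space orthogonality.
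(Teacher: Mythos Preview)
Your proof is correct and follows essentially the same route as the paper: both arguments project $\f$ orthogonally onto the complex line $\CC\g$, then control the residual ``radius mismatch'' using the reverse triangle inequality $|\,\|\f\|-\|\g\|\,|\le\|\,|\f|-|\g|\,\|$. The only cosmetic difference is that the paper introduces the intermediate point $r_N\xi_*\g$ with $r_N=\|\f\|/\|\g\|$ and argues via the triangle inequality, whereas you compute the exact minimum $\min_{|\xi|=1}\|\f-\xi\g\|^2=M^2+(|\pi|-b)^2$ directly as a point-to-circle distance and then bound $||\pi|-b|$; the resulting constants are identical.
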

	\begin{proof}
		For $\g=\bf 0$ the inequality holds. We can now assume $\g \neq \bf 0$. For fixed $\f,\g\neq 0$  the function $c \in \CC \mapsto \|\f -c \g\|_{\ell^2(\Phi)}^2$ is minimized by $c_*=r_* \xi_* $, where  and $\xi_*$ is a unimodular constant chosen so that $\langle \f, \xi_* \g \rangle $ is real and non-negative, and $r_*$ is  given by
		$$  r_*=\frac{|\langle \f,\g \rangle|}{ \|\g\|_{\ell^2(\Phi)}^2 }. $$
		Define $r_N=\| \f\|_{\ell^2(\Phi)}/\|\g\|_{\ell^2(\Phi)} $, and note that replacing $r_*$ with $r_N$ only costs a constant factor since
		\begin{align}\label{eq:a}
		\|\f-c_* \g\|_{\ell^2(\Phi)}^2=\| \f \|_{\ell^2(\Phi)}^2-\frac{|\langle \f, \g \rangle|^2}{\| \g\|_{\ell^2(\Phi)}^2} \geq \|\f\|_{\ell^2(\Phi)}^2-\frac{\|\f\|_{\ell^2(\Phi)}}{\|\g\|_{\ell^2(\Phi)}} |\langle \f, \g \rangle|=\frac{1}{2} \|\f-r_N \xi_* \g\|_{\ell^2(\Phi)}^2. 
		\end{align}
		We now have
		\begin{eqnarray*}\min_{\xi \in \C, |\xi|=1} \|\f-\xi\g \|_{\ell^2(\Phi)} 
		&\leq & \|\f-\xi_*\g \|_{\ell^2(\Phi)} \leq \|\f-r_N\xi_*\g \|_{\ell^2(\Phi)}+|r_N-1|\|\g \| 
	_{\ell^2(\Phi)} \\
	&\stackrel{\eqref{eq:a}}{\leq}& \sqrt{2}\|\f-c_* \g\|_{\ell^2(\Phi)} 
		+\left| \| \f \|_{\ell^2(\Phi)}-\| \g \|_{\ell^2(\Phi)} \right| \\
		& \stackrel{(*)}{ \leq} & \sqrt{2}\|\f-c_* \g\|_{\ell^2(\Phi)}+ \| \, |\f| - |\g| \, \|_{\ell^2(\Phi)}. 
		\end{eqnarray*}
		 Here $(*)$ follows from the observation  $\f$  and $|\f|$ have the same norm, and so do $\g$ and $|\g|$, and then applying the reverse triangle inequality.
 \end{proof}


\begin{proof}[Proof of Theorem \ref{thm:complex}]
Fix $f,g \in \B $. Let  $\xixi=(\xi_{v})_{v \in V_f} $ be a choice of unimodular constant $\xi_v$ per vertex $v$ which minimizes $\xi \mapsto \|\f-\xi \g\|_{\ell_p(\Phi_v)} $. Then	
	\begin{eqnarray}\label{eq:laplace_bound}
	\L_{G_f}(\xixi)&=&\sum_{({u,v}) \in E_f}|{\xi}_v-\xi_u|^2 w_{{u,v}}\stackrel{\text{Lem. \ref{lem:2}}}{\leq} 2C_0^2C_1^2\sum_{({u,v}) \in E_f} \| \,|\f|-|\g| \,\|_{\ell^2(\Phi_v)}^2+\| \,|\f|-|\g| \,\|_{\ell^2(\Phi_u)}^2\nonumber\\
	&\leq& 2C_0^2C_1^2 D \| \,|\f|-|\g| \,\|^2_{\ell^2(\Phi)}.
	\end{eqnarray}
 Take  $c_0 \in \CC $ such that $\xixi-\bar c_0 \one $ and $\one$ are orthogonal in $\ell^2(V_f,w) $.  We then obtain
	\begin{eqnarray*}
	\min_{\xi \in S^1}  \| \xi\f-\g\|_{\ell^2(\Phi)}^2 &\stackrel{\text{Lem.~\ref{lem:toC}} }{\leq}& 4 \min_{c\in \CC} \|c\f-\g\|_{\ell^2(\Phi)}^2+2\| |\f|-|\g| \|_{\ell^2(\Phi)}^2\\
	&\leq & 4  \|c_0\f-\g\|_{\ell^2(\Phi)}^2+2\| |\f|-|\g| \|_{\ell^2(\Phi)}^2\\
	&=& 4 \sum_{v \in V_f} \|(c_0\f-\bar \xi_v \f)+(\bar \xi_v \f-\g)\|_{\ell^2(\Phi_v)}^2+2\| |\f|-|\g| \|_{\ell^2(\Phi)}^2    \\
	&{\leq}&  8\sum_{v\in V_f}|\xi_v-\bar c_0|\|\f|_{\ell^2({\Phi_v})}^2+2(4C_0^2+1) \| |\f|-|\g| \|_{\ell^2(\Phi)}^2\\
	&\stackrel{\eqref{eq:lambda}}{\leq}& 8\lambda_G^{-1}(f)L(\xixi-\bar{c}_0\one)+2(4C_0^2 +1) \| |\f|-|\g| \|_{\ell^2(\Phi)}^2\\
	&\stackrel{\eqref{eq:laplace_bound}}{\leq}& (16\lambda_G^{-1}(f)C_0^2C_1^2D+8C_0^2+2) \| |\f|-|\g| \|_{\ell^2(\Phi)}^2 \\
	&\leq & \left(4C_0C_1D^{1/2}\lambda_G^{-1/2}+(8C_0^2+2)^{1/2}\right)^2\| |\f|-|\g| \|_{\ell^2(\Phi)}^2,
	\end{eqnarray*}
	By taking the square root of this inequality we obtain
	$$\min_{\xi \in S^1}  \| \xi\f-\g\|_{\ell^2(\Phi)}\leq C_3(1+\lambda_G^{-1/2}) |\f|-|\g| \|_{\ell^2(\Phi)} $$
	where
	\begin{equation}\label{eq:complex_constant}
	C_3=\max \{4C_0C_1D^{1/2}, (8C_0^2+2)^{1/2} \}. 
	\end{equation}
\end{proof}

\section{Additional Proofs}	\label{sec:add}
In this section, we include the proofs for Propositions \ref{prop:lowbound}, \ref{prop:optimal}  and \ref{prop:example}  and Corollary \ref{cor:sis} in Section \ref{sec:example}. 
\subsection{Proof of Proposition~\ref{prop:lowbound}}\label{proof:lowbound}
Throughout the proof we use the notation defined in Example~\ref{ex:finite}. For a signal $f\in  \B_{s,t}$ in \eqref{def:Bst}, we know that the graphs $G_f$ and $G$ have the same vertex set and edge set. 
By the assumption on the LSCC measurement scheme and the definition of weighted graph $G_f$ in \eqref{eq: graphweight}, we have   
	\begin{align*} w_\ell=\|\Phi_\ell(f)\|_2^2=\|\Phi_\ell(P_\ell f)\|_2^2  
	\leq B^2 \|P_\ell f\|^2_{2}\leq  2aB^2 t^2 {\ \rm for \  all } \ \ell\in V
	\end{align*}
	and 
	\begin{align*}w_{(\ell,\ell')} \ge a s^2 {\ \rm for \  all } \ (\ell,\ell')\in E. 
	\end{align*} 
By the definition of the  Cheeger constant in \eqref{eq:cheeger}, we obtain
	\begin{equation*}
	C_G(f)= \inf_{ S\subseteq V }\frac{\sum_{({u,v}) \in \partial S} w_{{u,v}}}{\min \{\sum_{v \in S} w_v, \sum_{v \not \in S} w_v \}} \geq 	\frac{s^2}{2B^2t^2}  \inf_{ S\subseteq V }\frac{\sum_{({u,v}) \in \partial S} 1}{\min \{\sum_{v \in S} 1, \sum_{v \not \in S} 1 \}}= \frac{s^2}{2B^2t^2} \widehat C_G. 
	\end{equation*}
	Similarly,  using an equivalent definition of algebraic connectivity	\begin{align*}
	\lambda_G(f)&=\min_{ g \neq 0} \max_{t\in \CC}\frac{\sum_{(\ell_1,\ell_2)\in E}w_{(\ell_1,\ell_2)}(g(\ell_1)-g(\ell_2))^2}{\sum_{\ell \in V} w_\ell (g(\ell)-t)^2}\geq
	\frac{s^2}{2B^2t^2} \min_{ g \neq 0} \max_{t\in \CC}\frac{\sum_{(\ell_1,\ell_2)\in E}(g(\ell_1)-g(\ell_2))^2}{\sum_{\ell \in V}  (g(\ell)-t)^2} \\
	&=\frac{s^2}{2B^2t^2}\widehat \lambda_G. 
	\end{align*}

\subsection{Proof of Proposition \ref{prop:optimal}}\label{proof:optimal}
Throughout the proof we use the notation defined in Example~\ref{ex:finite}.
	Without loss of  generality, we assume $t=1$ in \eqref{def:Bst}. For $\ell \in [L]$,  let $P_\ell$ denote the projection onto $\H_\ell$.   We choose $f,g \in \B_{s,t}$ by $f={\bf 1}$ and $g(k)=q_{k,d}$, where $q_{k,d} $ denotes the $k$th root of unity of order $d$,  
	$$q_{k,d}=\exp\left({\frac{2\pi i k}{d}}\right). $$
	We claim that for this choice of $f,g$, the inequality 
	\begin{equation}\label{eq:LB}
	\min_{\xi, |\xi|=1}  \|\Phi(f)-\xi \Phi(g)\|_2^2\geq c_{s, t}(\CC,L)^2  \| |\Phi(f)|-|\Phi(g)|\|_2^2  
	\end{equation}
	holds. 
	We first  show that $\|\Phi(f)-\xi \Phi(g)\|_2^2, |\xi|=1$  is proportional to $L$. For  any $\xi$ satisfying $|\xi|=1$,    we have 
	\begin{align} \label{eq:lhs}
	 \|\Phi(f)-\xi \Phi(g)\|_2^2&= \frac{1}{2} \sum_{\ell \in [L]} \|\Phi_\ell(f)-\xi \Phi_\ell(g)\|_2^2 \geq  \frac{A^2}{2}  \sum_{\ell \in [L]}\|P_\ell(f-\xi g)\|_{\H_\ell}^2  \nonumber \\
	&\geq  \frac{A^2}{2}\|f-\xi g\|_{\H}^2= A^2 \sum_{k\in[d]} |1-\xi q_{k,d}|^2=\frac{A^2}{2}\left(2d+ \xi \sum_{k\in[d]}  q_{k,d}+\bar \xi \sum_{k\in[d]}  \bar{q}_{k,d} \right) \nonumber\\
	&=dA^2=aLA^2,  
	\end{align}
	where we use the fact that the sum over all roots of unity of order $d$ is zero. 
	
	Next we show  that $ \| |\Phi(f)|-|\Phi(g)|\|_2^2  $ is proportional to $L^{-1}$.
	For $\ell \in [L]$, let $I_\ell$ be the set of $2a$ consecutive (in the cyclic sense) indices on which the measurements in $\Phi_\ell$ are supported,  and denote $\xi_\ell=q_{k(\ell),d}$ where  
	$k(\ell)$ is some index in $I_\ell$. Then
		\begin{align*}
	 \| \, |\Phi(f)|-|\Phi(g)| \, \|_2^2 &=  \frac{1}{2} \sum_{\ell=0}^{L-1}  \| \,|\Phi_\ell(f)|-|\xi_\ell \Phi_\ell( g)| \,\|_2^2  \leq  \frac{1}{2} \sum_{\ell=0}^{L-1} \|  \Phi_\ell(f)-\xi_\ell \Phi_\ell( g)\|_2^2 \\
	& \leq \frac{B^2}{2}  \sum_{\ell=0}^{L-1}\left( \sum_{j\in I_\ell} |f(j)-\xi_\ell  g(j)|^2  \right) \leq \frac{aB^{2}L}{2} |1-q_{2a,d}|^2\\
	&=a B^{2}L  \left(1-\cos\left(\frac{4\pi a}{d}\right) \right).
	\end{align*}
	Combining this with \eqref{eq:lhs},  we obtain inequality   \eqref{eq:LB} with 
	$$c_{s,t}(\CC,L)=\frac{A}{B}  \left(1-\cos\left(\frac{4\pi a}{d}\right) \right)^{-1/2},$$
	thus concluding the proof of the proposition.

\subsection{Proof of Proposition  \ref{prop:example}}\label{sub:cheeger_infinite}
	 Throughout this proof,  we use the notation defined in Example~\ref{ex:sis}.
	 \begin{enumerate} 
		\item[Part I:] Assume $f(x)=\sum_{k\in \ZZ} c_kB_N(x-k) $ where $|c_k|^p=e^{-\beta|k|} $ for some $\beta>0$. Our goal is to prove that the Cheeger constant of $f$ is positive.  We note that when considering the infimum in the definition of the Cheeger constant in \eqref{eq:cheeger}, it is sufficient to consider  only connected subsets $S$. In the line graph $G$,  connected subsets of $V=\ZZ$ are either finite intervals of the form 
		$$[k,\ell]_\ZZ:=[k,\ell] \cap \ZZ $$
		or one-sided infinite intervals. A simple limiting argument shows that in fact it is sufficient to consider finite intervals only.  Thus  the Cheeger constant in \eqref{eq:cheeger} is reduced to 
		\begin{equation}\label{eq:cheegersis}C_G(f)=\inf_{k<\ell}  \frac{w_{k,k-1}+w_{\ell,\ell+1}}{\min \{\sum_{j \in [k,\ell]_\ZZ} w_j,  \sum_{s \not \in [k,\ell]_\ZZ} w_s\}}.
		\end{equation}
		Note that $w_{k,k-1} $ and $w_{\ell,\ell+1}$ are a summation of $N-1$ entries of the geometric series $\c$, which include the $k$-th and $\ell$-th entries respectively. Thus 
		\begin{equation}\label{eq:edge_bound}
		w_{k,k-1}+w_{\ell,\ell+1} \geq (N-1)e^{-\beta N} (e^{-\beta |k|}+e^{-\beta |\ell|}). 
		\end{equation}
		As for all $j\in \ZZ$, we know that  $\Phi_j$ is a frame on  $\B_j$ with frame constants $(A, B)$ independent of $j$. Explicitly, for all $j\in \ZZ$,
		\begin{equation}\label{sis:frame}
		A^p\sum_{k=j-N+1}^j c_k^p\le \|\Phi_j(f)\|^p\le B^p \sum_{k=j-N+1}^j c_k^p. 
		\end{equation}
		Thus for each $j\in \ZZ$, 
		$$w_j=\|\Phi_j(f)\|^p\le B^p \sum_{k=j-N+1}^j c_k^p= B^p \sum_{k=j-N+1}^j e^{-|k|\beta} \leq NB^p e^{N\beta}e^{-|j|\beta}. $$
		Therefore 
		\begin{equation}\label{eq:vertex_bound}
		\min \{\sum_{j \in [k,\ell]_\ZZ} w_j,  \sum_{s \not \in [k,\ell]_\ZZ} w_s\}\leq NB^pe^{N\beta} 	\min \{\sum_{j \in [k,\ell]_\ZZ} e^{-|j|\beta},  \sum_{s \not \in [k,\ell]_\ZZ} e^{-|s|\beta}\}. 
		\end{equation}
	If $0\leq k<\ell $,  then
		$$\min \{\sum_{j \in [k,\ell]_\ZZ} e^{-|j|\beta},  \sum_{s \not \in [k,\ell]_\ZZ} e^{-|s|\beta}\} \leq \sum_{j=k}^\infty e^{-j\beta}=\frac{e^{-\beta k}}{1-e^{-\beta}} \leq \frac{e^{-\beta|k|}+e^{-\beta|\ell|}}{1-e^{-\beta}}, $$
		and the same inequality can be obtained for $k<\ell \leq 0 $ using the same argument. 
		
			If $k<0<\ell$,  then 
		\begin{align*}\min \{\sum_{j \in [k,\ell]_\ZZ} e^{-|j|\beta},  \sum_{s \not \in [k,\ell]_\ZZ} e^{-|s|\beta}\}& \leq \sum_{s \not \in [k,\ell]_\ZZ} e^{-|s|\beta}=\frac{(e^{-\beta(|k|+1)}+e^{-\beta(\ell+1)})}{1-e^{-\beta}} \\
		&\leq \frac{e^{-\beta|k|}+e^{-\beta|\ell|}}{1-e^{-\beta}}.\end{align*}
		
		Thus returning to \eqref{eq:edge_bound},  \eqref{eq:vertex_bound} and \eqref{eq:cheegersis}, we obtain
		$$C_G(f)\geq \frac{N-1}{NB^p}(1-e^{-\beta})e^{-2N\beta}>0. $$

		\item[Part II:]   We assume that there exists some $\beta>1$ such that 
		$|c_k|^p=(1+|k|)^{-\beta}$. We are going to show the Cheeger constant in \eqref{eq:cheegersis} is zero, i.e.,  $C_G(f)=0$. For $\ell \in \ZZ$,  denote
		$$[\ell,\infty)_\ZZ=[\ell,\infty)\cap \ZZ,  $$
		and choose $k_0>N$ such that $\sum_{k \in [k_0,\infty)_\ZZ} w_k \leq 1/2\sum_{k \in \ZZ} w_k $. Then we have
		\begin{equation}\label{eq:cheeger_zero}
		C_G(f) \leq \inf_{k\geq k_0} \frac{w_{k,k-1}}{\sum_{s \in [k,\infty)_\ZZ} w_s}. 
		\end{equation}
		Set $h(x)=x^{-\beta}$.  For $k \geq 1$, it is easily to verify that 
		$$ h(k)/2 \leq |c_k|^p=(1+|k|)^{-\beta}\leq h(k). $$
		Then we have 
		\begin{equation}\label{eq:poly_edges} w_{k,k-1}=  \sum_{j=k-N+1}^{k-1} |c_j|^p \leq N|c_{k-N+1}|^p\leq Nh(k-N+1)
		\end{equation}
		For all $s\ge k_0 \geq N$, 
		\begin{equation}\label{eq:poly_psi}
	\sum_{j=s-N+1}^{s} |c_j|^p \geq \sum_{j=s-N+1}^s \frac{h(j)}{2} \geq \frac12 \int_{s-N+1}^{s+1}h(x)dx.
		\end{equation}
		By \eqref{sis:frame}, we also  know that 
		$$ w_s=\|\Phi_s(f)\|_p^p\geq A^p\sum_{j=s-N+1}^{s} |c_j|^p , \ \forall  s\in \ZZ. $$
	Using this inequality together with \eqref{eq:poly_psi}, for  $k\ge k_0$,  we obtain
		\begin{align}\label{eq:poly_vert} \sum_{s \in [k,\infty)_\ZZ} w_s & \geq \frac{A^p}{2} \sum_{s \in [k,\infty)_\ZZ} \int_{s-N+1}^{s+1}h(x)dx\nonumber\\
		&  \geq \frac{A^p}{2} \int_{k-N+1}^{\infty} h(x)dx=\frac{A^p}{2(\beta-1)}(k-N+1)^{-(\beta-1)}. 
		\end{align}
		Returning to \eqref{eq:cheeger_zero}, using the bounds from \eqref{eq:poly_edges} and \eqref{eq:poly_vert},  we obtain that 
		$$C_G(f) \leq \frac{2N(\beta-1)}{A^p}\inf_{k \geq k_0} (k-N+1)^{-1}= 0. $$
	 \end{enumerate}

\subsection{Proof of Corollary \ref{cor:sis}} \label{sub:cor}
	Write $f=\sum_{k\in \ZZ^d} c_kB_N(\cdot-k)$ and $g=\sum_{k\in \ZZ^d} d_kB_N(\cdot-k)$. 	By \cite{cheng2019phaseless}, we know any real phase retrieval signal $f\in V(B_N)$ can be determined, up to a sign, from its phaseless samples taken on $\Gamma+\ZZ^d$.  Set $\Gamma'=\{\gamma\subset \Gamma| \ {\rm sign}(f(\gamma)g(\gamma))=1 \ \}$, then we have 
	$$\||f|-|g|\|_{l^2(\Gamma)}^2=\|f-g\|_{l^2(\Gamma')}^2+ \|f+g\|_{l^2(\Gamma\setminus \Gamma')}^2.$$
	By the definition of $\sigma$ in \eqref{def:sigma}, we have 
	\begin{eqnarray}
\sigma &\le& \max\big(\frac{\big(\sum_{\gamma\in \Gamma'}|\sum_{k\in K_0} (c_k-d_k) B_N(\gamma-k)|^2\big)^{1/2}}{(\sum_{k\in K_0}|c_k-d_k|^2)^{1/2}}, \frac{\big(\sum_{\gamma\in \Gamma\setminus\Gamma'}|\sum_{k\in K_0} (c_k+d_k) B_N(\gamma-k)|^2\big)^{1/2}}{(\sum_{k\in K_0}|c_k+d_k|^2)^{1/2}}\big)\nonumber\\
&\le& \frac{\big(\sum_{\gamma\in \Gamma'}|\sum_{k\in K_0} (c_k-d_k) B_N(\gamma-k)|^2\big)^{1/2} + \big(\sum_{\gamma\in \Gamma\setminus\Gamma'}|\sum_{k\in K_0} (c_k+d_k) B_N(\gamma-k)|^2\big)^{1/2}}{\min\big((\sum_{k\in K_0}|c_k-d_k|^2)^{1/2}, (\sum_{k\in K_0}|c_k+d_k|^2)^{1/2}\big)}\nonumber\\
&\le& \frac{\sqrt{2}\big(\sum_{\gamma\in \Gamma}||f(\gamma)|- |g(\gamma)||^2\big)^{1/2}}{\min\big((\sum_{k\in K_0}|c_k-d_k|^2)^{1/2}, (\sum_{k\in K_0}|c_k+d_k|^2)^{1/2}\big)}. \nonumber
\end{eqnarray}
Thus, for $\sigma>0$, we have 
	\begin{equation}\label{cor:eq1}\min_{\xi_0\in \{\pm 1\}}\sum_{k\in K_0}|c_k-\xi_0 d_k|^2 \le 2\sigma^{-2}\sum_{\gamma \in \Gamma}|\ |f(\gamma)|- |g(\gamma)| \ |^2. \end{equation}
Also, we have 
\begin{align*}\min_{\xi_0\in \{\pm 1\}}\sum_{\gamma\in \Gamma}|f(\gamma)-\xi_0 g(\gamma)|^2  &\le\min_{\xi_0\in \{\pm 1\}}\big(\sum_{k\in K_0}|c_k-\xi_0 d_k|^2\big)\sum_{\gamma\in \Gamma}\sum_{k\in K_0}|B_N(\gamma-k)|^2\\
&=\Big(\min_{\xi_0\in \{\pm 1\}}\big(\sum_{k\in K_0}|c_k-\xi_0 d_k|^2\big)\Big)trace({{\bf B}_\Gamma}^T{\bf B}_\Gamma)\\
&\le \Big(\min_{\xi_k\in \{\pm 1\}}\big(\sum_{k\in K_0}|c_k-\xi_k d_k|^2\big)\Big) \sharp(K_0) \lambda^2_{\max} ({\bf B}_\Gamma)\\
&\le 2 \sharp K_0 \lambda^2_{\max} ({\bf B}_\Gamma) \sigma^{-2}\sum_{\gamma \in \Gamma}|\ |f(\gamma)|- |g(\gamma)| \ |^2, \end{align*}
where ${\bf B}_\Gamma=(B_N(\gamma-k))_{\gamma\in \Gamma, k\in K_0}$. Combine with the shift-invariant property of the space $V(B_N)$ implies the $C_0$ in \eqref{const:local} is  $\sqrt{2}(\sharp K_0)^{1/2}\sigma^{-1} \lambda_{\max} ({\bf B}_\Gamma) $ for $p=2$. Obviously, by the construction of $\Psi$, we know $C_1=\sqrt{2}\sigma^{-1}$ in \eqref{eq:edges_dominated} by setting $g=0$ in \eqref{cor:eq1}. 
By Theorem \ref{thm:real}, we have $C_{V(B_N)}=\max\{2(\sharp K_0)^{1/2} \sigma^{-1} \lambda_{\max} ({\bf B}_\Gamma), 4 \sqrt{2}\sigma^{-2}(\sharp K_0)\lambda_{\max} ({\bf B}_\Gamma)\}$.

\bibliographystyle{plain}
\bibliography{bibPR}

\appendix

\section{The Cheeger inequality for infinite graphs}\label{app:Cheeger}
For the completeness of this paper,  we include the proof of  the Cheeger inequality \eqref{ineq:cheeger}, which  holds for the family of infinite graphs we discuss in this paper. Let $G=(V,E, w) $ be a weighted graph with the vertex set $V$ being  countable, the edge set $E \subseteq V \times V$, and the weights $w=(w_v)_{v\in V} \cup (w_{u, v})_{(u, v)\in E}$ associated with  positive numbers $w_{{u,v}}, ({u,v})\in E $ and $w_v, v \in V$. The sequences $(w_{{u,v}})_{({u,v})\in E}$ and $(w_{v})_{v\in V}$ are summable.  We assume that the degree of $G$ is bounded by some $D_N>0 $,  in the sense that for all $v \in V$,
\begin{equation}\label{eq:norm}\sum_{u| ({u,v})\in E } \frac{w_{{u,v}}}{w_v}\leq D_N. 
\end{equation}

Give a graph $G=(V, E,w)$, we define 
$$\vol(S)=\sum_{v \in S} w_v \text{ and } |T|=\sum_{({u,v}) \in T}w_{{u,v}} $$
 for every subset $S \subset V $ and $T \subset E$.
Recall the definition of the Cheeger constant from \eqref{eq:cheeger} as

\begin{equation}\label{eq:cheeger_again}
 C_G=\inf_{ \substack{S \subset V,\\  \vol(S) \leq \vol(V)/2}  } C_S, 
\end{equation} where $C_S= \frac{|\partial S|}{\vol(S)}$. 

The Laplacian of $G$ is a bounded linear operator on the space $\ell^2(V,w) $, whose norm is defined by $\|g\|^2_{\ell^2(V,w)}=\sum_{v\in V} w_v|g_v|^2$. For  $g \in\ell^2(V,w) $ we denote
\begin{equation}\label{def:L}
\L_G(g)=\sum_{({u,v})\in E}(g_u-g_v)^2 w_{{u,v}}, 
\end{equation}
and if additionally $g\neq 0$ we write
\begin{equation}\label{def:R}
R(g)=\frac{L_G(g)}{\|g\|^2_{\ell^2(V,w)}}.
\end{equation} 
Recall the definition of the algebraic connectivity $\lambda_G $ from \eqref{eq:lambda}, we have 
\begin{equation}\label{eq:lambda_again}\lambda_G=\inf_{g \neq 0, g \perp 1} R(g).  \end{equation}

 We then prove the Cheeger inequality by essentially following the proof in \cite{chung2007four}.

\begin{theorem}[Cheeger inequality] Let $G=(V,E, w)$ be a weighted graph. Then the following holds
$$2C_G \geq \lambda_G \geq \frac{C_G^2}{2D_N}.$$
\end{theorem}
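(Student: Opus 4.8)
The plan is to prove the two inequalities in $2C_G \geq \lambda_G \geq \frac{C_G^2}{2D_N}$ separately, following the classical template while tracking the convergence issues arising from $V$ being only countable. For the upper bound $\lambda_G \leq 2C_G$, I would use shifted indicators as test functions in the Rayleigh quotient $R$ of \eqref{eq:lambda_again}. Fix $S \subset V$ with $\vol(S) \leq \vol(V)/2$, set $c = \vol(S)/\vol(V)$, and take $g = \one_S - c\,\one$; this is orthogonal to $\one$ in $\ell^2(V,w)$ by the choice of $c$, hence admissible. Since adding a constant leaves the Dirichlet form unchanged, $\L_G(g) = \L_G(\one_S) = |\partial S|$, while a direct computation gives $\|g\|_{\ell^2(V,w)}^2 = \vol(S)\bigl(1 - \vol(S)/\vol(V)\bigr) \geq \vol(S)/2$. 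Thus $\lambda_G \leq R(g) \leq 2|\partial S|/\vol(S) = 2 C_S$, and taking the infimum over $S$ yields $\lambda_G \leq 2 C_G$.

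For the lower bound the strategy is to show that \emph{every} admissible test function $g \perp \one$ satisfies $R(g) \geq C_G^2/(2D_N)$; taking the infimum then gives the claim, so no minimizer need exist. First I would reduce to a nonnegative function supported on at most half the volume. Let $m$ be a weighted median of $g$, so that both $\vol\{g>m\}$ and $\vol\{g<m\}$ are at most $\vol(V)/2$. Because $g \perp \one$, shifting increases the norm, $\|g - m\one\|^2 = \|g\|^2 + m^2\vol(V) \geq \|g\|^2$, so $R(g-m\one) \leq R(g)$. Writing $g - m\one = g_+ - g_-$ with $g_\pm \geq 0$ having essentially disjoint supports, the edgewise inequality $\L_G(g_+) + \L_G(g_-) \leq \L_G(g - m\one)$ together with $\|g_+\|^2 + \|g_-\|^2 = \|g - m\one\|^2$ forces $\min(R(g_+), R(g_-)) \leq R(g-m\one)$ by the mediant inequality. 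I then replace $g$ by whichever of $g_\pm$ achieves the minimum, calling it $h$; thus $h \geq 0$, $\vol(\operatorname{supp} h) \leq \vol(V)/2$, and $R(h) \leq R(g)$.

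The heart of the argument is a co-area estimate applied to $h^2$. Cauchy--Schwarz on the edge sum gives
\begin{equation*}
\sum_{(u,v)\in E} |h_u^2 - h_v^2|\, w_{u,v} \leq \Bigl(\sum_{(u,v)\in E}(h_u-h_v)^2 w_{u,v}\Bigr)^{1/2}\Bigl(\sum_{(u,v)\in E}(h_u+h_v)^2 w_{u,v}\Bigr)^{1/2},
\end{equation*}
where the first factor is $\L_G(h)^{1/2}$, and, using $(h_u+h_v)^2 \leq 2(h_u^2+h_v^2)$ together with the degree bound \eqref{eq:norm}, the second factor is at most $(2D_N)^{1/2}\|h\|_{\ell^2(V,w)}$. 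On the other hand, the co-area formula (Tonelli, justified by nonnegativity) gives $\sum_{(u,v)\in E}|h_u^2 - h_v^2| w_{u,v} = \int_0^\infty |\partial S_t|\,dt$ with $S_t = \{v : h_v^2 > t\}$; since each $S_t$ has volume at most $\vol(\operatorname{supp} h) \leq \vol(V)/2$, we have $|\partial S_t| \geq C_G \vol(S_t)$, while $\int_0^\infty \vol(S_t)\,dt = \|h\|_{\ell^2(V,w)}^2$. Chaining these bounds yields $C_G \|h\|^2 \leq (2D_N)^{1/2}\L_G(h)^{1/2}\|h\|$, i.e. $C_G^2 \leq 2D_N R(h) \leq 2D_N R(g)$, which is exactly the desired inequality.

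The main obstacle is the lower bound, specifically making the reduction steps rigorous in the infinite, summable-weight setting: I must confirm that a weighted median exists (it does, since $\vol(V) = \sum_v w_v < \infty$), that all edge and vertex sums converge for $g \in \ell^2(V,w)$ so the Cauchy--Schwarz and co-area manipulations are legitimate, and that arguing with an arbitrary admissible $g$ rather than an actual eigenfunction suffices (which it does, as the final inequality is an infimum statement). The bookkeeping of edge orientation in $\partial S$, $\L_G$, and $|\partial S|$ must be kept consistent with the conventions of the appendix, but this affects only constants, not the structure of the argument.
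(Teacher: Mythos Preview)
Your proposal is correct and follows essentially the same classical template as the paper: shifted indicators for the upper bound, and for the lower bound a median shift, a split into positive and negative parts with the mediant inequality, followed by Cauchy--Schwarz and a co-area argument. The only difference is cosmetic---you run the co-area step via continuous level sets $S_t=\{h^2>t\}$ and Tonelli, whereas the paper sorts the vertices and uses discrete Abel summation over the nested sets $S_i$; both implement the same idea, and your integral version is arguably tidier in the countably infinite setting.
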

\begin{proof}

For every $S\subset V$ satisfying $\vol(S)\leq  \vol(V)/2 $,  we define a function $g_S=(g_v)_{v\in V} \in \ell^2(V, w)$ as 
\begin{equation*}
g_v=\twopartdef{1-\frac{\vol(S)}{\vol(V)}}{v\in S, }{-\frac{\vol(S)}{\vol(V)}}{ v\not\in S.}
\end{equation*}  
This function is orthogonal to $\one$ in $\ell^2(V,w)$, and satisfies
$$\lambda_G \leq \frac{L(g_S)}{\|g_S\|^2_{\ell^2(V,w)}}=\frac{|\partial S |}{\vol(S)(1-\frac{\vol(S)}{\vol(V)})}\leq 2 \frac{|\partial S|}{\vol(S)}=2C_S .$$
Since this inequality holds for all $S\subset V$ satisfying  $\vol(S)\leq \vol(V)/2 $,  we finish the proof of  the first part of the inequality $\lambda_G \leq 2C_G $.

For the other direction, choose any $g=(g_v)_{v\in V}$ orthogonal to $\one$ in $\ell^2(V,w)$. We sort the vertices of $V$ using integer indices $i \in \ZZ$, so that 
$$\ldots \geq g_{v_{i+1}} \geq g_{v_{i}}\geq g_{v_{i-1}} \geq \ldots.   $$
We define the vertex subset $S_i=\{v_j| \ j\leq i\}, i\in \ZZ$ and let $r$ denote the largest integer so that $\vol(S_r) \leq \vol(V)/2 $. Since $g \perp 1 $ we have that
\begin{equation}\label{eq:t}\|g\|_{\ell^2(V,w)}^2=\min_{t \in \RR} \|g-t\one\|_{\ell^2(V,w)}^2 \leq \|g-g_{v_r}\one\|_{\ell^2(V,w)}^2. 
\end{equation}
Write $\tilde{g}=g-g_{v_r}\one$, and define the positive and negative part of $\tilde{g}$ by 
\begin{equation*}
\tilde{g}^+_v=\twopartdefotherwise{g_v-g_{v_r}}{g_v\geq g_{v_r} }{0}{,}
\end{equation*}  
and 
\begin{equation*}
\tilde{g}^-_v=\twopartdefotherwise{|g_v-g_{v_r}|}{g_v\leq g_{v_r} }{0}{.}
\end{equation*}   
Now note that 
\begin{align*}
R(g)=&\frac{\sum_{({u,v}) \in E} (g_u-g_v)^2 w_{{u,v}}}{\|g\|_{\ell^2(V,w)}^2}\stackrel{\eqref{eq:t}}{\geq} \frac{\sum_{({u,v}) \in E} (g_u-g_v)^2 w_{{u,v}}}{\|g-g_{v_r}\one\|_{\ell^2(V,w)}^2}\\
&\geq \frac{\sum_{({u,v}) \in E} \left( [\gg^+_u-\gg^+_v]^2+[\gg^-_u-\gg^-_v]^2 \right)w_{{u,v}} }{\|\gg^+\|_{\ell^2(V,w)}^2+\|\gg^{-}\|_{\ell^2(V,w)}^2 }=\frac{L(\gg^+)+L(\gg^{-})}{\|\gg^+\|_{\ell^2(V,w)}^2+\|\gg^{-}\|_{\ell^2(V,w)}^2}, 
\end{align*}
Without loss of generality, we assume that $R(\gg^+)\leq R(\gg^{-}) $. The last expression can be rearranged so that $R(g)$ is a weighted average of $R(\gg^+) $ and $R(\gg^+) $, and so 
\begin{align*}
R(g)\geq R(\gg^+)&=\frac{\sum_{({u,v})\in E} (\gg^+_u-\gg^+_v)^2w_{{u,v}} }{\|\gg^+\|^2_{\ell^2(V,w)}} 
\frac{\sum_{({u,v})\in E} (\gg^+_u+\gg^+_v)^2w_{{u,v}} }{\sum_{({u,v})\in E} (\gg^+_u+\gg^+_v)^2w_{{u,v}}} \\
&\geq \frac{\left[ \sum_{({u,v})\in E} ((\gg^+_u)^2-(\gg^+_v)^2)w_{{u,v}}
\right]^2 }{2D_N\|\gg^+\|^4_{\ell^2(V,w)}}  \text{ Cauchy-Schwarz inequality in }\ell^2(E,w)\\
&\stackrel{(*)}{=}\frac{\left[\sum_{i < r}((\gg^+_{v_i})^2-(\gg^+_{v_{i+1}})^2)  |\partial S_i|\right]^2}{2D_N\|\gg^+\|^4_{\ell^2(V,w)}}\\
&\geq   \frac{C_G^2}{2D_N}\frac{\left[\sum_{i<r}((\gg^+_{v_i})^2-(\gg^+_{v_{i+1}})^2) \vol (S_i)\right]^2}{\|\gg^+\|^4_{\ell^2(V,w)}}\\
&=\frac{C_G^2}{2D_N}\frac{\left[\sum_{i<r}(\gg^+_{v_i})^2  (\vol (S_i)-\vol(S_{i-1}))\right]^2}{\|\gg^+\|^4_{\ell^2(V,w)}}\\
&=\frac{C_G^2}{2D_N}\frac{\left[\sum_{i<r}(\gg^+_{v_i})^2  w_{v_i}\right]^2}{\|\gg^+\|^4_{\ell^2(V,w)}}=\frac{C_G^2}{2D_N}.
\end{align*}
Here $(*)$ follows from the fact that 
\begin{align*} \sum_{({u,v})\in E} \Big((\gg^+_{u})^2-(\gg^+_{v})^2\Big)w_{{u,v}}&=\sum_{j<k| (v_j,v_k)\in E} ((\gg^+_{v_j})^2-(\gg^+_{v_k})^2)w_{v_j,v_k}\\
&=\sum_{j<k| (v_j,v_k)\in E}\left( \sum_{i=j}^{k-1} (\gg^+_{v_i})^2-(\gg^+_{v_{i+1}})^2   \right) w_{v_j,v_k}\\
&=\sum_i \Big((\gg^+_{v_i})^2-(\gg^+_{v_{i+1}})^2 \Big)  \Big( \sum_{j \leq i<k| (v_j,v_k)\in E} w_{v_j,v_k} \Big)\\
&=\sum_i (( \gg^+_{v_i})^2-(\gg^+_{v_{i+1}})^2) |\partial S_i|. 
\end{align*}
\end{proof}
\end{document}